\documentclass[a4paper,10pt,USenglish,cleveref,capitalise,autoref,thm-restate,numberwithinsect,nameinlink]{article}
\usepackage[margin=1.3in]{geometry}

% choose options for [] as required from the list
% in the Reference Guide

\usepackage{multicol}		% used for the two-column index

\usepackage{lineno}
%\linenumbers
\usepackage{xspace}

%\journal{}

%\usepackage[T1]{fontenc}
\usepackage{amssymb}
\usepackage{amsmath}
\usepackage{amsthm}
\usepackage{enumerate}

\usepackage{tcolorbox}
\usepackage{dsfont}
\usepackage{thmtools}
\usepackage{thm-restate}
\usepackage[edges]{forest}
\usepackage{multirow}
\usepackage{color}
\usepackage{accents}
\usepackage{tikz}
\usepackage{todonotes}
\usepackage{comment}
\usepackage{placeins}
\usepackage{nicefrac}
\usepackage[USenglish]{babel}
\usepackage[]{algorithm2e}
\usepackage[colorlinks = true,
			linkcolor = blue,
			urlcolor  = blue,
			citecolor = blue,
			anchorcolor = blue]{hyperref}

\newif\iflong\longfalse

% One definition!
\newcommand\Problem[3]{
\begin{center}
\begin{minipage}[c]{.95\linewidth}
%\begin{tcolorbox}
\textsc{#1}\\
\textbf{Input}: #2.\\
\textbf{Question}: #3?
%\end{tcolorbox}
\end{minipage}
\end{center}}
% end
\newcommand{\pfpara}[1]{\smallskip\noindent\textit{#1.}}

\newtheorem{theorem}{Theorem}
\newtheorem{corollary}{Corollary}
\newtheorem{proposition}{Proposition}

%Commands
\newcommand{\mcal}{\mathcal}
\newcommand{\mbb}{\mathbb}
\newcommand{\Instance}{\ensuremath{\mathcal{I}}\xspace}
\newcommand{\Oh}{\ensuremath{\mathcal{O}}\xspace}
\newcommand{\W}[1]{\ensuremath{\mathrm{W[#1]}}\xspace}
\newcommand\NP{\ensuremath{\mathrm{NP}}\xspace}
\newcommand\paraNP{\ensuremath{\mathrm{para}}-\ensuremath{\mathrm{NP}}\xspace}
\newcommand\FPT{\ensuremath{\mathrm{FPT}}\xspace}
\newcommand\XP{\ensuremath{\mathrm{XP}}\xspace}

\DeclareMathOperator{\wcode}{w-code}
\DeclareMathOperator{\height}{height}
\DeclareMathOperator{\val}{val}
\DeclareMathOperator{\bin}{bin}
\DeclareMathOperator{\var}{var}

\DeclareMathOperator{\off}{off}
\DeclareMathOperator{\Costs}{Cost}
\newcommand{\numP}{\left\lVert \mathcal{P}\right\rVert}

\newcommand\Tree{\ensuremath{\mathcal{T}}\xspace}
\newcommand{\bet}[1]{\ensuremath{\left|#1\right|}}
\newcommand{\sprob}{sur\-vi\-val pro\-ba\-bil\-ity\xspace}
\newcommand{\sprobs}{sur\-vi\-val pro\-ba\-bi\-li\-ties\xspace}
\newcommand{\myvec}[1]{\ensuremath{\vec {#1}}}
\newcommand{\mo}{-_{\ge 0}}

\newcommand\KP{\textsc{Knapsack}\xspace}
\newcommand\SubSum{\textsc{Subset Sum}\xspace}

\newcommand\ILPF{\textsc{ILP-Feasibility}\xspace}
\newcommand\MCKP{\textsc{MCKP}\xspace}
\newcommand\MCKPLong{\textsc{Multiple-Choice Knapsack}\xspace}
\newcommand\PS{\textsc{Pe\-nal\-ty-Sum}\xspace}
\newcommand\GNAP{\textsc{GNAP}\xspace}
\newcommand\SGNAP{\textsc{Star-GNAP}\xspace}
\newcommand\GNAPLong{\textsc{Generalized Noah's Ark Problem}\xspace}
\newcommand\NAPold[3][]{\ensuremath{#1 \stackrel{#2}{\to} #3}~\textsc{NAP}\xspace}
\newcommand\NAP[4][]{\NAPold[\ensuremath{#1}]{\ensuremath{#2}}{\ensuremath{#3}}}

\newcommand\unitc{\textsc{Unit-Cost NAP}\xspace}

%Todo commands

\usepackage{soul}

\newcommand{\kommentar}[1]{}

\begin{document}
  
\title{A Multivariate Complexity Analysis of the Generalized Noah's Ark Problem}

\author{Christian Komusiewicz%
	\thanks{\href{https://orcid.org/0000-0003-0829-7032}{ORCID: 0000-0003-0829-7032}.}\\
		Friedrich Schiller University Jena, Jena, Germany\\
		c.komusiewicz@uni-jena.de
	\and
	Jannik Schestag%
	\thanks{\href{https://orcid.org/0000-0001-7767-2970}{ORCID: 0000-0001-7767-2970}. Partially funded by the Dutch Research Council (NWO), project “Optimization for and with Machine Learning (OPTIMAL)” OCENW.GROOT.2019.015.
	The research was partially carried out during an extended research visit of Jannik Schestag at TU Delft, The Netherlands, in 2023. We thank the German Academic Exchange Service (DAAD), project no. 57556279, for the funding.}\\
		Delft University of Technology, Delft, The Netherlands\\
		j.t.schestag@tudelft.nl
}

\date{}

\maketitle

% Use \authorrunning{Short Title} for an abbreviated version of
% your contribution title if the original one is too long
% \institute{Christian Komusiewicz \href{https://orcid.org/0000-0003-0829-7032}{\orcidID{0000-0003-0829-7032}},
% \email{c.komusiewicz@uni-jena.de},
% \at Jannik Schestag \href{https://orcid.org/0000-0001-7767-2970}{\orcidID{0000-0001-7767-2970}}, \email{j.t.schestag@uni-jena.de},
% \at Friedrich-Schiller-Universität Jena,  Ernst-Abbe-Platz 2, 07743 Jena, Germany;
% %\at Parts of the research were made on a research exchange of Jannik Schestag at the TU Delft. The trip was funded by the German Academic Exchange Service (DAAD), programme no. 57556279.
% \at

%}
%
% Use the package "url.sty" to avoid
% problems with special characters
% used in your e-mail or web address
%

	\begin{abstract}
		In the \GNAPLong, one is given a phylogenetic tree on a set of species~$X$ and a set of conservation projects for each species.
		Each project comes with a cost and raises the survival probability of the corresponding species.
		The aim is to select a conservation project for each species such that the total cost of the selected projects does not exceed some given threshold and the expected phylogenetic diversity is as large as possible.
		We study the complexity of \GNAPLong and some of its special cases with respect to several parameters related to the input structure, such as the number of different costs, the number of different survival probabilities, or the number of species,~$\bet X$.
	\end{abstract}

\section{Introduction}
The preservation of biological diversity is one of humanity's most critical challenges. To help systematically address this challenge, it is useful to quantify or predict the effect of interventions. Here, two questions arise: how to measure the biological diversity of ecosystems and how to model the effect of certain actions on the biological diversity of an ecosystem under consideration. 

A popular way to measure the biological diversity of an ecosystem, introduced by Faith~\cite{faith92}, is to consider the \emph{phylogenetic diversity} of the species present in that system. Here, the phylogenetic diversity is the sum of evolutionary distances between the species when their evolution is modeled by an evolutionary (phylogenetic) tree. The tree then not only gives the phylogenetic diversity of the whole species set but also allows us to infer the phylogenetic diversity of any subset of these species that would remain after some currently present species become extinct. Now, to model the effect of certain actions, a first simple model is that one can afford to protect $k$~species and that all other species go extinct. Maximizing phylogenetic diversity under this model can be solved in polynomial time with a simple greedy algorithm~\cite{steel,pardi05}. Later, more realistic models were introduced. One step was to model that protecting some species may be more costly than protecting others~\cite{pardi}. Subsequent approaches also allowed us to model uncertainty as follows: performing an action to protect some species does not guarantee the survival of that species but only raises the survival probability~\cite{weitzman}. In this model, one now aims to maximize the \emph{expected} phylogenetic diversity. Finally, one may also consider the even more realistic case when, for each species, one may choose from a set of different actions or even from combinations of different actions. Each choice is then associated with a cost and with a resulting survival probability. This model was proposed by Pardi~\cite{pardidiss} and~Billionnet~\cite{billionnet13,billionnet17} as~\GNAPLong{} (\GNAP).

Introducing cost differences for species protection makes the problem of maximizing phylogenetic diversity NP-hard~\cite{pardi} and thus all of the even richer models are NP-hard as well. Apart from the NP-hardness and several pseudopolynomial-time algorithms, there is no work that systematically studies which structural properties of the input make the problems tractable. This work aims to fill this gap. More precisely, we study how different parameters related to the input structure influence the algorithmic complexity of \GNAP{} and several of its special cases.

In a nutshell, we show the following. First, \GNAP{} can be solved efficiently when the number of different project costs and survival probabilities is small. Second, while a constant number~$\bet X$ of species~$X$ leads to polynomial-time solvability, algorithms with running time~$f(|X|)\cdot | \mathcal{I}| ^{\Oh(1)}$ are unlikely to exist. Finally, restricted cases where, for example the input tree has height~1 or there is exactly one action for each species that guarantees its survival are much easier than the general problem. Some of our results are obtained by observing a close relation to the \MCKPLong~problem, for which we also  provide a number of new complexity results. 

This work is organized as follows. In Section~\ref{sec:prelims}, we give formal definitions of the problem and the considered parameters and an overview of our results. In Section~\ref{sec:MCKP}, we study \MCKPLong. In Section~\ref{sec:GNAP}, we consider \GNAP{}, first the general case and then the case where the phylogenetic tree has height~1. In Section~\ref{sec:two-projects}, we consider the special case of \GNAP{} when there are at most two possible actions for each species. 

\section{Basic Definitions and Overview of the Results}
\label{sec:prelims}%\paragraph{Mathematical Definitions}
For an integer~$n$, by~$[n]$ we denote the set~$\{1,\dots,n\}$, and by~$[n]_0$ we denote the set~$\{0,1,\dots,n\}$.
We generalize functions $f: A\to B$ to handle subsets~$A' \subseteq A$ by~$f(A') := \bigcup_{a\in A'} f(a)$ if~$B$ is a family of sets and~$f_\Sigma(A') := \sum_{a\in A'} f(a)$ if~$B \subseteq \mathbb{R}$.
A \textit{partition of~$N$} is a family of pairwise disjoint sets~$\{ N_1, \dots, N_m \}$ such that~$\bigcup_{i=1}^m N_i=N$.

\subsection{Parameterized Complexity}
To assess the influence of structural properties of the input on the problem complexity, we study the problems in the framework of parameterized complexity.
For a detailed introduction to parameterized complexity refer to the standard monographs~\cite{cygan,downeybook}. We only  recall the most important definitions: Instances~$(I,k)$ of a parameterized problem consist of a classical problem instance~$I$ and a parameter~$k$. A parameterized problem with parameter~$k$ is \emph{fixed-parameter tractable} (\FPT) if every instance~$(\mathcal I,k)$ can be solved in~$f(k)\cdot |\mathcal{I}|^{\Oh(1)}$ time.
If every instance can be solved with an algorithm such that~$f$ is a polynomial, then we say the parameterized problem is \emph{polynomial fixed-parameter tractable} (PFPT).
A parameterized problem is \emph{slicewice-polynomial}~(\XP) if every instance can be solved in~$|\mathcal{I}|^{g(k)}$ time. Some problems in \XP are assumed to have no \FPT-algorithm. In particular, there is the complexity class \W 1 for which $\FPT\subseteq \W 1 \subseteq \XP$ is known and it is widely assumed that problems that are \W 1-hard have no \FPT-algorithm. To prove that a problem is \W 1-hard, one uses parameterized reductions. A \emph{parameterized reduction} from a parameterized problem~$L$ to a parameterized problem~$L'$ is an algorithm which, given an instance $(\mathcal I,k)$ of~$L$, constructs in~$f(k)\cdot |\mathcal{I}|^{\Oh(1)}$~time an equivalent instance~$(I',k')$ of~$L'$ such that~$k'\le g(k)$ for some function~$g$. A parameterized problem~$L$ is called \emph{\paraNP-hard} if there exists some constant value~$t$ such that~$L$ restricted to instances with~$k\le t$ is NP-hard. 

In our running time analyses, we assume a unit-cost RAM model where arithmetic operations have constant running time.

\subsection{Trees}
A \textit{directed graph}~$G$ is a tuple~$(V,E)$, where~$V$ is called the \textit{set of vertices} of~$G$ and $E$ the \textit{set of edges} of~$G$, respectively. We denote an edge directed from~$u$ to~$v$ by~$(u,v)$.
The \textit{degree of a vertex~$v$} is the number of edges that are incident with~$v$.
A \textit{tree~$T$ with root~$r$} is a directed acyclic graph with a distinguished vertex~$r$ such that each vertex of~$T$ can be reached from~$r$ via exactly one path.
A vertex~$v$ of a tree~$T$ is a \textit{leaf} when the degree of~$v$ is one.
The~\textit{height of a rooted tree~$T$}, denoted ~$\height_T$, is the maximal distance of its root~$r$ to any leaf. 
For an edge~$(u,v)$, we call~$u$~\textit{the parent of~$v$} and~$v$~\textit{a child of~$u$}.
For a vertex~$v$ with parent $u$, the \textit{subtree $T_v$ rooted at $v$} is the connected component containing $v$ in~$T-(u,v)$. When $v$ is the root of $T$, we define $T_v:=T$. For a vertex~$v$ in a tree~$T$, the~\textit{offspring of~$v$}, denoted~$\off(v)$,~is the set of leaves in~$T_v$.
For any vertex~$v$, we assume an arbitrary but fixed ordering of its children.
A~\textit{star} is a graph~$G=(V,E)$ with a vertex~$c\in V$ and~$E=\{ (c,v) \mid v\in V\setminus\{c\} \}$.

\subsection{Phylogenetic Diversity}
A phylogenetic \textit{$X$-tree~$\Tree=(V,E,\lambda)$} (in short,~$X$-tree) is a tree~$T$ with root~$r$, where~$\lambda: E\to \mathbb{N}$ is an edge-weight function and~$X$ is the set of leaves of~$T$. In biological applications, the internal vertices of~$T$ correspond to hypothetical ancestors of the leaves and~$\lambda(e)$ describes the evolutionary distance between the endpoints of~$e$.
An~$X$-tree~$\Tree$ is \textit{ultrametric} if there is an integer~$p$ such that the sum of the weights of the edges from the root~$r$ to~$x_i$ equals~$p$ for every leaf~$x_i$.
% For an~$X$-tree~$\Tree=(V,E,\lambda)$, a \textit{subtree with root~$v$}, denoted~$\Tree_v$ is the rooted tree that contains the offspring of~$v$ and no further vertices.

For a taxon~$x_i\in X$, a \textit{project list}~$P_i$ is a tuple~$(p_{i,1},\dots,p_{i,\ell_i})$. Herein, each project~$p_{i,j}$ represents an action that may raise the survival probability of~$x_i$ at a certain cost.  Formally, each \emph{project} is a tuple~$(c_{i,j},w_{i,j})\in \mbb N_0\times \mbb Q\cap [0,1]$, where~$c_{i,j}$ is the \textit{cost} and~$w_{i,j}$ the \textit{\sprob} of~$p_{i,j}$.
 As a project with higher cost will only be considered when the \sprob is higher, we assume the costs and \sprobs{} to be ordered. That is,~$c_{i,j}<c_{i,j+1}$ and~$w_{i,j}<w_{i,j+1}$ for every project list~$P_i$ and every~$j<\ell_i$.
An \textit{$m$-collection of project lists}~$\mcal P$ is a set of~$m$~project lists~$\{P_1,\dots,P_m\}$.
For a project set~$S$, the~\textit{total cost~$\Costs(S)$ of~$S$} is~$\sum_{p_{i,j}\in S} c_{i,j}$.

For a given~$X$-tree~$\Tree$, the~\textit{expected phylogenetic diversity~$PD_\Tree(S)$ of a set of projects~$S=\{p_{1,j_1}, \dots, p_{|X|,j_{|X|}}\}$} is given by

$$
PD_\Tree(S) := \sum_{(u,v)\in E} \lambda(u,v) \cdot \left(1 - \prod_{x_i\in \off(v)} (1 - w_{i,j_i}) \right).
$$
Herein, the term~$\left(1 - \prod_{x_i\in \off(v)} (1 - w_{i,j_i}) \right)$ is the probability that at least one offspring of~$v$ (and with it the edge~$(u,v)$) survives  when~$S$ is implemented.  % Thus, the phylogenetic diversity is the sum of the expected values of the edges of~$\Tree$ when applying~$S$.

\subsection{Problem Definitions, Parameters, and Results Overview}
%\todos[inline]{Diskussion: Ich würde gerne bei~$\var_w$ bleiben und Wahrscheinlichkeiten mit~$w_i$ ausdrücken, da~$p$ bereits für projects reserviert ist. Billionnet hat auch für die Wahrscheinlichkeiten~$w$ benutzt~\cite{billionnet13}.}
The most general version of the problem under consideration is defined as follows.
\Problem{\GNAPLong (\GNAP)}
{An~$X$-tree~$\Tree=(V,E,\lambda)$,
an~$\bet{X}$-collection of project lists~$\mathcal{P}$,
and numbers~$B\in \mbb N_0,D\in \mbb Q_{\ge 0}$}
{Is there a set of projects~$S=\{p_{1,j_1}, \dots, p_{|X|,j_{|X|}}\}$, one from each project list of~$\mathcal{P}$, 
such that~$PD_\Tree(S)\ge D$ and~$\Costs(S)\le B$}
Such a project set~$S$ is called~\textit{a solution for the instance~$\mathcal I=(\Tree,\mathcal P,B,D)$}. % We refer the special case of~\GNAP where~$\Tree$ is restricted to have height 1 as \SGNAP. The special case of \GNAP where each species has two projects is defined as follows.

We consider the following special cases of \GNAP:
\begin{itemize}
\item \SGNAP is the special case of~\GNAP where~$\Tree$ is restricted to have height 1.
\item \NAP[a_i]{c_i}{b_i}{2} is the special case of~\GNAP where each project list~$P_i$ contains exactly two projects~$(0,a_i)$ and~$(c_{i},b_i)$.
  In other words, in an instance of~\NAP[a_i]{c_i}{b_i}{2} we can decide for each taxon~$x_i$ whether we want to spend~$c_i$ to increase the \sprob of~$x_i$ from~$a_i$ to~$b_i$. If the \sprob of all species is either 0 or some uniform value, then this is denoted by~\NAP[0]{c_i}{b}{2}. For~$b=1$ this corresponds to the choice between guaranteed extinction and guaranteed survival.  
% Let us remark that the case with two projects with strictly positive cost can be easily reduced to \NAP[a_i]{c_i}{b_i}{2}: If $P_i$ contains two projects~$(\bar c,a_i)$ and~$(\hat c,b_i)$ with $0<\bar c <  \hat c$ for some~$i\in [\bet{X}]$, then reducing the cost of both projects and the total budget by $\bar c$ gives an equivalent instance.
\item \unitc is the special case of \NAP[a_i]{c_i}{b_i}{2} where every project with a positive \sprob has cost~1.
\end{itemize}

%\paragraph{Parameters}
We study \GNAP with respect to several parameters which we describe in the following. The results are summarized in Table~\ref{tab:results-gnap}  for the general problem and in Table~\ref{tab:results-special gnap} for the special case with two projects per taxon. % If not stated differently, we assume in the following that~$i \in [\bet X]$ and~$j\in [|P_i|]$.
The input of \GNAP directly gives the natural parameters \textit{number of taxa~$|X|$}, \textit{budget~$B$}, and required \textit{diversity~$D$}. Closely related to~$B$ is the \textit{maximum cost per project~$C=\max_{i,j} c_{i,j}$}. We may assume that no projects have a cost that exceeds the budget, as we can delete them from the input and so~$C\le B$. We may further assume that~$B\le C\cdot \bet X$, as otherwise we can compute in polynomial time whether the diversity of the most valuable projects of the taxa exceeds~$D$ and return yes, if it does and no, otherwise.
\begin{table}[t]
	\centering
%	\footnotesize
	\caption{Complexity results for \GNAP with unbounded number of projects per taxon.}
	\label{tab:results-gnap}
	\begin{tabular}{lll}
		\hline
		Parameter & \GNAP & \SGNAP \\
		\hline
		$|X|$ & \W1-hard (Thm.~\ref{thm:X-W1hard}), XP (Prop.~\ref{pps:BruteForce}\ref{pps:GNAP-X-XP}) & \W1-hard (Thm.~\ref{thm:X-W1hard}), XP (Prop.~\ref{pps:BruteForce}\ref{pps:GNAP-X-XP})\\
		$B$ & XP (Prop.~\ref{pps:BruteForce}\ref{obs:XP-B}) & PFPT $\Oh(B \cdot \numP)$ (Prop.~\ref{pps:height=1->mckp})\\
		$C$ & para-NP-h (Thm.~\ref{thm:PS-hardness}) & PFPT $\Oh(C \cdot \numP \cdot |X|)$ (Prop.~\ref{pps:height=1->mckp})\\
		$D$ & para-NP-h (Prop.~\ref{pps:stars-obs}\ref{obs:D=height=1,varw=2}) & para-NP-h (Prop.~\ref{pps:stars-obs}\ref{obs:D=height=1,varw=2})\\
		\hline
		$\var_c+\var_w$ & XP $\Oh(|X|^{2\cdot (\var_c + \var_w) +1})$ (Thm.~\ref{thm:varc+varw}) & FPT (Thm.~\ref{thm:height=1:varc+varw})\\
		$\var_c$ & para-NP-h (Thm.~\ref{thm:PS-hardness}) & XP $\Oh(|X|^{\var_c-1} \cdot \numP)$ (Prop.~\ref{pps:height=1->mckp})\\
		$\var_w$ & para-NP-h (Prop.~\ref{pps:stars-obs}\ref{obs:KP-to-height=1}) & para-NP-h (Prop.~\ref{pps:stars-obs}\ref{obs:KP-to-height=1})\\
		$\val_\lambda$ & para-NP-h (Thm.~\ref{thm:X-W1hard}) & para-NP-h (Thm.~\ref{thm:X-W1hard})\\
		$D+\wcode$ & open & FPT $\Oh(D \cdot 2^{\wcode} \cdot \numP)$ (Prop.~\ref{pps:height=1->mckp})\\
		$B+\var_w$ & XP $\Oh(B \cdot |X|^{2\cdot \var_w + 1})$ (Thm.~\ref{thm:B+varw}) & PFPT (Prop.~\ref{pps:height=1->mckp})\\
		$D+\var_w$ & para-NP-h (Prop.~\ref{pps:stars-obs}\ref{obs:D=height=1,varw=2}) & para-NP-h (Prop.~\ref{pps:stars-obs}\ref{obs:D=height=1,varw=2})\\
		\hline
	\end{tabular}
\end{table}

Further, we consider the \textit{maximum number of projects per taxon~$L:=\max_{i} \bet{P_i}$}. By definition,~$L=2$ in~\NAP[a_i]{c_i}{b_i}{2} and in~\GNAP we have~$L\le C+1$.
We denote the \textit{number of projects} by~$\numP=\sum_{i} |P_i|$. Clearly,~$\bet X\le \numP$, $L\le \numP$, and~$\numP \le \bet X \cdot L$.
By~$\var_c$, we denote the \textit{number of different costs}, that is,~$\var_c:=|\{ c_{i,j} : (c_{i,j},w_{i,j})\in P_i, P_i\in \mathcal{P} \}|$. We define the \textit{number of different \sprobs}~$\var_w$ analogously. This so-called \textit{number of numbers} parameterization was introduced by Fellows et al.~\cite{fellows12}; it is motivated by the idea that in many real-life instances this parameter may be small. In fact, for nonnegative numbers, the number of numbers is never larger than the maximum value which is used in pseudopolynomial-time algorithms.
Also, we consider the~\textit{maximum encoding length for \sprobs $\wcode=\max_{i,j}(\text{binary length of } w_{i,j})$} and the~\textit{maximum edge weight~$\val_\lambda = \max_{e\in E} \lambda(e)$}. Observe that because the maximal \sprob of a taxon could be smaller than 1, one cannot assume that~$\val_\lambda\le D$.

%\kommentar{\paragraph{Results}}{}
\begin{table}[t]
	\centering
%	\footnotesize
	\caption{Complexity results  for \NAP[0]{c_i}{1}{2}, the special case where the \sprobs are only 0 or 1, and \unitc the special case where each project has unit costs.
	The ``---''-sign indicates parameters that are (partially) constant in the specific problem definition and thus are not interesting.}
	\label{tab:results-special gnap}
	\begin{tabular}{lll}
		\hline
		Parameter & \NAP[0]{c_i}{1}{2} & \unitc \\
		\hline
		$|X|$ & FPT (Prop.~\ref{pps:BruteForce}\ref{obs:FPT-X-2-NAP}) & FPT (Prop.~\ref{pps:BruteForce}\ref{obs:FPT-X-2-NAP})\\
		$B$ & PFPT $\Oh(B^2 \cdot n)$ \cite{pardi} & XP (Prop.~\ref{pps:BruteForce}\ref{obs:XP-B})\\
		$C$ & PFPT $\Oh(C^2 \cdot n^3)$ (Cor.~\ref{cor:C-01-NAP}) & --- \\
		$D$ & PFPT $\Oh(D^2 \cdot n)$ (Prop.~\ref{pps:wcode=1-D}) & open \\
		\hline
		$\val_\lambda$ & PFPT $\Oh((\val_\lambda)^2 \cdot n^3)$ (Cor.~\ref{cor:wcode=1,val-lambda}) & open \\
		$\var_c$ & XP (Cor.~\ref{cor:wcode=1-varc}) & --- \\
		$\var_w$ & --- & XP (Cor.~\ref{cor:C=1-XP-varw})\\
		% $D+\wcode$ & --- & open \\
		% $B+\var_w$ & --- & XP (Cor.~\ref{cor:C=1-XP-varw})\\
		% $D+\var_w$ & --- & open \\
		\hline
	\end{tabular}
\end{table}

\subsection{Basic Observations}
We now observe some first complexity classifications that can be obtained from naive brute-force algorithms and from adaptions of a known NP-hardness reduction.
\begin{proposition}
	\label{pps:BruteForce}
	\begin{enumerate}[(a)]
		\item\label{pps:GNAP-X-XP}\GNAP can be solved in~$\Oh(L^{|X|} \cdot n^2)$~time.
		\item\label{obs:FPT-X-2-NAP}\NAP[a_i]{c_i}{b_i}{2} can be solved in~$\Oh(2^{\bet X} \cdot n^2)$~time.
		\item\label{obs:XP-B}\GNAP can be solved in~$||P||^B \cdot |\Instance|^{\Oh(1)}$~time.
	\end{enumerate}
\end{proposition}
\begin{proof}
	(\ref{pps:GNAP-X-XP})
	Consider all possibilities to select a project~$p_{i,j_i}\in P_i$ for all taxa~$x_i$. For each chosen set~$S:=\{p_{i,j_i} \mid i\in[|X|]\}$ compute~$\Costs(S)$ and~$PD_{\Tree}(S)$ and return yes if~$\Costs(S) \le B$ and~$PD_{\Tree}(S) \ge D$. If no considered set~$S$ gives a solution, then return no.
	
	For each taxon, the number of different choices is at most~$L$.  Moreover, we can compute~$\Costs(S)$ and $PD_{\Tree}(S) \ge D$ in~$\Oh(n^2)$~time.  Hence, the algorithm has running time~$\Oh(L^{|X|} \cdot n^2)$.
	
	(\ref{obs:FPT-X-2-NAP}) This is a direct consequence of the previous with~$L=2$.
	
	(\ref{obs:XP-B}) A~$\GNAP$ solution contains at most~$B$ projects with positive costs.
	Hence, a solution can be found by iterating over all size-$B$ project sets, checking whether any of them gives a solution.
\end{proof}

Some hardness results can be obtained directly from a known reduction from the NP-hard~\KP problem to~\NAP[0]{c_i}{1}{2}~\cite{pardi}. In~\KP one is given a set of items~$N$, a cost-function~$c: N \to \mathbb N$, a value-function~$d: N \to \mathbb N$, and two integers~$B$ and~$D$ and asks whether there is an item set~$N'$ such that~$c_\Sigma(N') \le B$ and~$d_\Sigma(N') \ge D$.

\begin{proposition}
	\label{pps:stars-obs}
	\begin{enumerate}[(a)]
		\item\label{obs:KP-to-height=1}\NAP[0]{c_i}{1}{2} is \NP-hard, even if the tree~$\Tree$ has height~1~\cite{pardi}.
		\item\label{obs:D=height=1,varw=2}\NAP[0]{c_i}{b}{2} is \NP-hard, even if~$D=1$,~$b\in (0,1]$ is a constant, and the tree~$\Tree$ has height~1.
		\item\label{cor:D=height=1,L=2,ultrametric}\NAP[0]{c_i}{b_i}{2} is \NP-hard, even if the tree~$\Tree$ is ultrametric with~$\height_{\Tree}=\val_\lambda=1$, and~$D=1$.
	\end{enumerate}
\end{proposition}
\begin{proof}
	(\ref{obs:KP-to-height=1})
	Let~$\Instance=(N,c,d,B,D)$ be an instance of~\KP.
	Define~$\Tree:=(V,E,\lambda)$ to be an~$N$-tree with~$V:=\{w\} \cup N$ and~$E:=\{ (w,x_i) \mid x_i\in N \}$ and~$\lambda(w,x_i) := d(x_i)$.
	For each leaf~$x_i$, define a project list~$P_i$ that contains two projects~$(0,0)$ and~$(c(x_i),1)$.
	
	Then,~$\Instance':=(\Tree,\mathcal{P},B':=B,D':=D)$ is a yes-instance of~\NAP[0]{c_i}{1}{2} if and only if~$\Instance$ is a yes-instance of~\KP. This can be seen by observing that a set~$S$ is a solution for~$\Instance$ if and only if the project set $S'$ obtained by adding $p_{i,1}
	=(0,0)$  if $x_i\notin S$ and $p_{i,2}=(c(x_i),1)$ if $x_i\in S$ is a solution for~$\Instance'$.
	Indeed this follows from
	$$
	d_\Sigma(S)= \sum_{x_i \in S} d(x_i) = \sum_{x_i \in S} \lambda(w,x_i)=PD_\Tree(S')
	$$
	and
	$$
	c_\Sigma(S) = \sum_{x_i \in S} c(x_i)= \sum_{x_i \in S}  c_{i,2} =\Costs(S').
	$$
	
	(\ref{obs:D=height=1,varw=2})
	In this reduction, one could also set~$D':=1$ and replace the \sprob of every project with positive cost to~$\nicefrac 1D$.
	Thus,~\SGNAP is \NP-hard even if~$D=1$ and~$\var_w=2$.
	
	(\ref{cor:D=height=1,L=2,ultrametric})
	We may assume in the previous reduction that there is no project~$p = (c,1/D)$ of taxon~$x$ with~$\lambda(\rho,x) > 1/D$ and~$c<B$, as we have a trivial yes-instance, otherwise.
	Henceforth, we remove all taxa with project cost~$c>B$ and for all remaining projects, we divide the \sprob by~$\lambda(\rho,x)$, directly yielding the desired result.
\end{proof}

\section{Multiple-Choice Knapsack}
\label{sec:MCKP}
In this section, we consider \MCKPLong (\MCKP), a classic variant of \KP in which the item set is divided into classes and from every class, exactly one item can be chosen. \MCKP has been studied numerous times over the years~\cite{nauss78,pisinger,bansal,kellerer}. \MCKP is also special case of the \textsc{Max d-Dimensional Knapsack} problem, for which~Gurski, Rehs, and Rethmann~\cite{rehs} provided a parameterized complexity analysis. 

To define the problem, let~$c$ and~$d$ denote the cost and value functions, respectively. Accordingly, for an item~$a_i$, we call~$c(a_i)$ the {\it cost of~$a_i$} and~$d(a_i)$ the {\it value of~$a_i$}. Recall that using our function notation for sets, this means that~$c_\Sigma(A)$ and~$d_\Sigma(A)$ denote the sum of the costs and values of the elements of~$A$ and~$c(A)$ and~$d(A)$ denote the set of costs and values assumed by some element of~$A$.  
\Problem{\MCKPLong (MCKP)}
{A set of items~$N=\{a_1,\dots,a_n\}$,
	a partition~$\{N_1,\dots,N_m\}$ of~$N$,
	two functions~$c,d: N\to \mbb N$,
	and two integers~$B$ and~$D$}
{Is there a set~$S\subseteq N$ such that
	$|S\cap N_i|=1$ for each~$i\in [m]$,
	$c_\Sigma(S)\le B$, and~$d_\Sigma(S)\ge D$}
A set~$S$ that fulfills these criteria is called a~\textit{solution} for the instance~$\mathcal I$.

For \MCKP, we consider parameters that are closely related to the parameters described for \GNAP: The input directly gives the \textit{number of classes~$m$}, the \textit{budget~$B$}, and the desired~\textit{value~$D$}.
Closely related to~$B$ is the \textit{maximum cost for an item~$C=\max_{a_j \in N} c(a_j)$}. Since projects whose cost exceeds the budget can be removed from the input, we may assume~$C\le B$. Further, we assume that~$B\le C\cdot m$, as otherwise, one can return yes if the total value of the most valuable items per class exceeds~$D$, and no otherwise.
We also consider~$\var_c := |c(N)|$, the \textit{number of different costs}, and~$\var_d:=|d(N)|$, the \textit{number of different values}.
The size of the biggest class is denoted by~$L$. If one class~$N_i$ contains two items~$a_p$ and~$a_q$ with the same cost and~$d(a_p)\le d(a_q)$, the item~$a_p$ can be removed from the instance. Thus, we may assume that no class contains two items with the same cost, and so~$L\le \var_c$. Analogously, we may assume that no class contains two same-valued items and so~$L\le\var_w$.
%\todos{We may say something like that the most parameters for \MCKP can be seen analogously like in \GNAP to save space.}
Table~\ref{tab:results-mckp} lists the old and new complexity results for \MCKPLong.

\begin{table}[t]
	\centering
	\caption{Complexity results for \MCKPLong.}
%		\footnotesize
		\label{tab:results-mckp}
	\begin{tabular}{ll}
		\hline
		Parameter & \MCKP\\
		\hline
		$m$ & \W1-hard, XP (Thm.~\ref{thm:MCKP-m-W1})\\
		$B$ & PFPT $\Oh(B \cdot |N|)$~\cite{pisinger}\\
		$C$ & PFPT $\Oh(C \cdot |N|\cdot m)$ (Prop.~\ref{obs:C-MCKP})\\
		$D$ & PFPT $\Oh(D \cdot |N|)~$\cite{bansal}\\
		$L$ & para-NP-h~\cite{kellerer}\\
		$\var_c$ & XP~$\Oh(m^{\var_c-1}\cdot |N|)$ (Prop.~\ref{pps:MCKP-XP-varc})\\
		$\var_d$ & XP~$\Oh(m^{\var_d-1}\cdot |N|)$ (Prop.~\ref{pps:MCKP-XP-vard})\\
		$\var_c+\var_d$ & \FPT (Thm.~\ref{thm:MCKP-ILPF})\\
		\hline
	\end{tabular}
\end{table}\subsection{Algorithms for Multiple-Choice Knapsack}
\label{subsec:algos-MCKP}
First, we provide some algorithms for~\MCKP. 
It is known that~\MCKP can be solved in~$\Oh(B \cdot |N|)$ time~\cite{pisinger}, or in~$\Oh(D \cdot |N|)$ time~\cite{bansal}.
As we may assume that~$C\cdot m\ge B$, we may also observe the following.
\begin{proposition}
	\label{obs:C-MCKP}
	\MCKP can be solved in~$\Oh(C \cdot \bet N \cdot m)$ time.
\end{proposition}

\KP is FPT with respect to the number of different costs~$\var_c$~\cite{etscheid}.
This result is shown by a reduction to \ILPF with~$f(\var_c)$ variables.
This approach cannot be adopted easily, as it has to be checked whether a solution contains exactly one item per class.
In Propositions~\ref{pps:MCKP-XP-varc} and~\ref{pps:MCKP-XP-vard} we show that \MCKP is XP with respect to the number of different costs and different values, respectively.
Then, in Theorem~\ref{thm:MCKP-ILPF}, we show that \MCKP is FPT with respect to the parameter~$\var_c+\var_d$.

In the following, let~$\mcal I=(N,\{N_1,\dots,N_m\},c,d,B,D)$ be an instance of~\MCKP. We let~$c_1,\dots,c_{\var_c}$ and~$d_1,\dots,d_{\var_d}$ denote the set of different costs and the set of the different values in~\Instance where~$c_i<c_{i+1}$ for each~$i\in[\var_c-1]$ and~$d_j<d_{j+1}$ for each~$j\in [\var_d-1]$. In other words,~$c_i$ is the~$i$th cheapest cost in~$c(N)$ and~$d_j$ is the~$j$th smallest value in~$d(N)$.
Since  each class~$N_i$ contains at most one item with cost~$c_p$ and at most one item with value~$d_q$, we have~$|N_i|\le \var_c$ and~$|N_i|\le \var_d$ for every~$i\in[m]$.

\begin{proposition}
	\label{pps:MCKP-XP-varc}
	\MCKP can be solved in~$\Oh(m^{\var_c-1}\cdot |N|)$ time, where~$\var_c$ is the number of different costs.
\end{proposition}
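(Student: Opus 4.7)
The plan is to enumerate candidate solutions by their \emph{cost profile}, i.e., a tuple $(n_1,\dots,n_{\var_c})$ of nonnegative integers with $\sum_{p=1}^{\var_c} n_p = m$ that records, for each $p\in[\var_c]$, how many of the chosen items have cost $c_p$. For $\var_c$ treated as the parameter, there are $\binom{m+\var_c-1}{\var_c-1}=\Oh(m^{\var_c-1})$ such profiles, and the total cost of any solution realizing a given profile is exactly $\sum_p n_p\cdot c_p$, so the budget constraint can be checked from the profile alone.

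The main work is then a dynamic program over the classes that, for each prefix $N_1,\dots,N_i$ and each achievable profile so far, computes the maximum attainable value. Formally, I would let $f(i;n_1,\dots,n_{\var_c})$ denote the maximum of $d_\Sigma(S')$ over all $S'\subseteq N$ with $|S'\cap N_j|=1$ for each $j\in[i]$ and with exactly $n_p$ items of cost $c_p$; set $f(\cdot):=-\infty$ if no such $S'$ exists. The base case is $f(0;0,\dots,0)=0$, and for each class $i+1$ and each item $a\in N_{i+1}$ of cost $c_p$ the transition relaxes
$$f(i+1;n_1,\dots,n_p+1,\dots,n_{\var_c})\;\ge\; f(i;n_1,\dots,n_{\var_c}) + d(a)$$
across all profiles in row~$i$. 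The instance is a yes-instance if and only if there exists a profile $(n_1,\dots,n_{\var_c})$ with $\sum_p n_p=m$, $\sum_p n_p\cdot c_p\le B$, and $f(m;n_1,\dots,n_{\var_c})\ge D$.

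To bound the running time, note that row~$i$ contains at most $\binom{i+\var_c-1}{\var_c-1}=\Oh(m^{\var_c-1})$ profiles and that processing class $i+1$ performs one relaxation per pair consisting of a profile in row~$i$ and an item in $N_{i+1}$. Summing yields $\sum_{i=0}^{m-1}\Oh(m^{\var_c-1})\cdot |N_{i+1}|=\Oh(m^{\var_c-1}\cdot |N|)$, which also dominates the final scan of row~$m$. I do not foresee a real obstacle here; the only thing to get right is the state space, since tracking the full profile keeps the number of states at $\Oh(m^{\var_c-1})$ per row so that the factor-$|N|$ blowup coming from the transitions still fits inside the claimed bound.
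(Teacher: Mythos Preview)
Your proposal is correct and takes essentially the same approach as the paper: a dynamic program over the classes whose state is the cost profile of the partial solution, yielding~$\Oh(m^{\var_c-1})$ states per row and the claimed~$\Oh(m^{\var_c-1}\cdot|N|)$ time overall. The only cosmetic difference is that the paper explicitly drops the last coordinate~$p_{\var_c}$ (since it equals~$i-\sum_{j<\var_c}p_j$) to index the table directly by an~$(\var_c-1)$-tuple, whereas you keep the full~$\var_c$-tuple and rely on the constraint~$\sum_p n_p=i$ to bound the number of states to~$\binom{i+\var_c-1}{\var_c-1}=\Oh(m^{\var_c-1})$.
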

\begin{proof}
	We describe a dynamic programming algorithm with a table that has a dimension for all the~$\var_c$ different costs, except for~$c_{\var_c}$.
	An entry~$F[i,p_1,\dots,p_{\var_c-1}]$ of the table stores the maximal total value of a set~$S$ that contains exactly one item of each set of~$N_1,\dots,N_i$ and contains exactly~$p_j$ items with cost~$c_j$ for each~$j\in [\var_c-1]$. Subsequently,~$S$ contains exactly~$p_{\var_c}^{(i)} := i-\sum_{j=1}^{\var_c-1} p_j$ items with cost~$c_{\var_c}$.
	If~$\sum_{j=1}^{\var_c-1} p_j \ge i$, then store~$F[i,p_1,\dots,p_{\var_c-1}] = -\infty$.
	In the rest of the proof, by~$\myvec{p}$ we denote~$(p_1,\dots,p_{\var_c-1})$ and by~$\myvec{p}_{(j) +z}$ we denote~$(p_1,\dots,p_j+z,\dots,p_{\var_c-1})$ for an integer~$z$.
	Let~${\myvec{0}}$ denote the~$(\var_c-1)$-dimensional zero.
	
	For the initialization of the values for~$F[1,\myvec{p}]$, only size-1 subsets of~$N_1$ are considered.
	Thus, for every~$a\in N_1$ with cost~$c(a) = c_j < c_{\var_c}$, store~$F[1,{\myvec{0}}_{(j) + 1}]=d(a)$.
	If~$N_1$ contains an item~$a$ with cost~$c(a)=c_{\var_c}$, then store~$F[1,{\myvec{0}}]=d(a)$.
	For all other~$\myvec{p}$, store~$F[1,\myvec{p}]=-\infty$.
	Once the entries for~$i$ classes have been computed, one may compute the entries for~$i+1$ classes using the recurrence
	\begin{eqnarray}
		\label{eqn:MCKP-varc}
		F[i+1,\myvec{p}] &=&
		\max_{a \in N_{i+1}}
		\left\{
		\begin{array}{ll}
			F[i,\myvec{p}_{(j) -1}] + d(a)
			& \text{if } c(a) = c_j < c_{\var_c} \text{ and } p_j \ge 1\\
			F[i,\myvec{p}] + d(a)
			& \text{if } c(a) = c_{\var_c}
		\end{array}
		\right.
		.
	\end{eqnarray}
	
	Return yes if~$F[m,\myvec{p}]\ge D$ for~$p_{\var_c}^{(m)} \cdot c_{\var_c} + \sum_{i=1}^{\var_c-1} p_i \cdot c_i \le B$ and return no, otherwise.
	
	\pfpara{Correctness}
	For given integers~$i\in [m]$ and~$\myvec{p} \in [i]_0^{\var_c-1}$, we define~$\mathcal S^{(i)}_{\myvec{p}}$ to be the family of~$i$-sized sets~$S\subseteq N$ that contain exactly one item of each of~$N_1,\dots,N_i$ and where~$p_\ell$ is the number of items in~$S$ with cost~$c_\ell$ for each~$\ell \in [\var_c-1]$.
	
	For fixed~$\myvec{p}\in [i]_0^{\var_c-1}$, we prove that~$F[i,\myvec{p}]$ stores the maximal value of a set~$S\in \mathcal S^{(i)}_{\myvec{p}}$ by an induction. This implies that the algorithm is correct.
	For~$i=1$, the claim is easy to verify from the initialization.
	Now, assume the claim is correct for a fixed~$i\in [m-1]$. We first prove that if~$F[i+1,\myvec{p}]=q$, then there exists a set~$S\in \mathcal S^{(i+1)}_{\myvec{p}}$ with~$d_\Sigma(S)=q$.
	Afterward, we prove that~$F[i+1,\myvec{p}]\ge d_\Sigma(S)$ for every set~$S\in \mathcal S^{(i+1)}_{\myvec{p}}$.
	
	($\Rightarrow$)
	Let~$F[i+1,\myvec{p}]=q$.
	Let~$a\in N_{i+1}$ with~$c(a) = c_j$ be an item of~$N_{i+1}$ that maximizes the right side of Equation~(\ref{eqn:MCKP-varc}) for~$F[i+1,\myvec{p}]$.
	In the first case, assume~$c_j < c_{\var_c}$, and thus~$F[i+1,\myvec{p}]=q=F[i,\myvec{p}_{(j) -1}] + d(a)$.
	By the induction hypothesis, there is a set~$S\in S^{(i)}_{\myvec{p}_{(j) -1}}$ such that~$F[i+1,\myvec{p}_{(j) -1}]=d_\Sigma(S)=q-d(a)$.
	Observe that~$S':=S\cup\{a\}\in S^{(i+1)}_{\myvec{p}}$. The value of~$S'$ is~$d_\Sigma(S')=d_\Sigma(S)+d(a)=q$.
	The other case with~$c(a) = c_{\var_c}$ is shown analogously.
	
	($\Leftarrow$)
	Let~$S\in \mathcal S^{(i+1)}_{\myvec{p}}$ and let~$a\in S\cap N_{i+1}$.
	In the first case, assume~$c(a) = c_j < c_{\var_c}$.
	Observe that~$S' := S\setminus\{a\}\in S^{(i)}_{\myvec{p}_{(j) -1}}$.
	Consequently,
	\begin{eqnarray}
		\label{eqn:MCKP-varc-IH1}
		F[i+1,\myvec{p}]
		&\ge& F[i,\myvec{p}_{(j) -1}] + d(a)\\
		\label{eqn:MCKP-varc-IH2}
		&=& \max\{ d_\Sigma(S) \mid S \in S^{(i)}_{\myvec{p}_{(j) -1}}\} + d(a)\\
		\nonumber
		&\ge& d_\Sigma(S') + d(a) = d_\Sigma(S).
	\end{eqnarray}
	Herein, Inequality~(\ref{eqn:MCKP-varc-IH1}) is the definition of the recurrence in Equation~(\ref{eqn:MCKP-varc}), and Equation~(\ref{eqn:MCKP-varc-IH2}) follows by the induction hypothesis.
	The other case with~$c(a) = c_{\var_c}$ is shown analogously.
	
	\pfpara{Running time}
	First, we show how many options of~$\myvec{p}$s there are and then how many equations have to be computed for one of these options.
	
	For~$p_1,\dots,p_{\var_c-1}$ with~$\sum_{j=1}^{\var_c-1} p_j \ge m$ and each~$i\in [m]$, the entry~$F[i,\myvec{p}]$ stores~$-\infty$. Consequently, we consider a vector~$\myvec{p}$ with~$p_j=m$ only if~$p_\ell=0$ for each~$\ell \ne j$.
	Thus, we are only interested in~$\myvec{p}\in [m-1]_0^{\var_c-1}$ or~$\myvec{p}={\myvec{0}}_{(j) +m}$ for each~$j\in[\var_c-1]$.
	So, there are~$m^{\var_c-1} + m \in \Oh(m^{\var_c-1})$ options of~$\myvec{p}$.
	
	For a fixed~$\myvec{p}$, each item~$a\in N_i$ is considered exactly once in the computation of~$F[i,\myvec{p}]$. Thus, overall~$\Oh(m^{\var_c-1} \cdot |N|)$ time is needed to compute the table~$F$.
	Additionally, we need~$\Oh(\var_c \cdot |N|)$ time to check whether~$F[m,p_1,\dots,p_{\var_c-1}]\ge D$ and~$p_{\var_c}^{(m)} \cdot c_{\var_c} + \sum_{i=1}^{\var_c-1} p_i \cdot c_i \le B$ for any~$\myvec{p}$. As we may assume~$m^{\var_c-1} > \var_c$, the running time of the entire algorithm is~$\Oh(m^{\var_c-1} \cdot |N|)$.
\end{proof}

We now adapt of the dynamic programming algorithm described in the proof of Proposition~\ref{pps:MCKP-XP-varc}.
In this adaption, instead of storing the maximum value of a set of items with a given set of costs, we store the minimum cost a set of items with a given set of values. This shows that the problem is in XP with respect to the number of different values of the input items.

\begin{proposition}
	\label{pps:MCKP-XP-vard}
	\MCKP{} can be solved in~$\Oh(m^{\var_d-1}\cdot |N|)$, where~$\var_d$ is the number of different values.
\end{proposition}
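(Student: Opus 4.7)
The plan is to mirror the dynamic program of Proposition~\ref{pps:MCKP-XP-varc}, but swap the roles of costs and values: instead of indexing the table by how many items of each cost have been chosen and storing the maximum total value, I will index by how many items of each value have been chosen and store the \emph{minimum} total cost. Concretely, for $i\in [m]$ and $\myvec{p}=(p_1,\dots,p_{\var_d-1})\in [i]_0^{\var_d-1}$ with $\sum_{j=1}^{\var_d-1}p_j\le i$, let $G[i,\myvec{p}]$ be the minimum of $c_\Sigma(S)$ over all sets $S$ that pick exactly one item from each of $N_1,\dots,N_i$ and contain exactly $p_j$ items of value $d_j$ for each $j\in[\var_d-1]$ (and hence exactly $p_{\var_d}^{(i)}:=i-\sum_{j=1}^{\var_d-1}p_j$ items of value $d_{\var_d}$); set $G[i,\myvec{p}]=+\infty$ if $\sum_{j=1}^{\var_d-1}p_j>i$ or if no such set exists.

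The initialization at $i=1$ is analogous: for every $a\in N_1$ with value $d(a)=d_j<d_{\var_d}$ set $G[1,{\myvec{0}}_{(j)+1}]=c(a)$, and if some $a\in N_1$ has $d(a)=d_{\var_d}$ set $G[1,{\myvec{0}}]=c(a)$; other entries get $+\infty$. The recurrence becomes
\begin{eqnarray*}
G[i+1,\myvec{p}] &=&
\min_{a\in N_{i+1}}
\left\{
\begin{array}{ll}
G[i,\myvec{p}_{(j)-1}] + c(a) & \text{if } d(a)=d_j<d_{\var_d} \text{ and } p_j\ge 1,\\
G[i,\myvec{p}] + c(a) & \text{if } d(a)=d_{\var_d}.
\end{array}
\right.
\end{eqnarray*}
Finally, return yes iff there is a $\myvec{p}$ with $G[m,\myvec{p}]\le B$ and $\sum_{j=1}^{\var_d-1}p_j\cdot d_j+p_{\var_d}^{(m)}\cdot d_{\var_d}\ge D$, and no otherwise. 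Note that, once $\myvec p$ is fixed at the top level, the total value of any set counted by $G[m,\myvec p]$ is determined, so comparing this total value to $D$ suffices.

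For correctness, I would carry out the same two-direction induction on $i$ as in Proposition~\ref{pps:MCKP-XP-varc}. The $(\Rightarrow)$ direction takes an item $a$ attaining the minimum in the recurrence and lifts a witness for the smaller instance to one with an extra item $a$; the $(\Leftarrow)$ direction removes the item chosen from $N_{i+1}$ in an arbitrary optimal $S\in \mathcal S^{(i+1)}_{\myvec{p}}$ (with $\mathcal S^{(i+1)}_{\myvec{p}}$ now defined analogously with $d$ instead of $c$) and applies the induction hypothesis. Swapping $\max$ with $\min$ and $d(a)$ with $c(a)$ in the inequality chain gives the bound $G[i+1,\myvec{p}]\le c_\Sigma(S)$.

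The running-time analysis is identical: the only relevant vectors $\myvec{p}$ are those in $[m-1]_0^{\var_d-1}$ together with the $\var_d-1$ vectors of the form $\myvec{0}_{(j)+m}$, giving $\Oh(m^{\var_d-1})$ choices; for each fixed $\myvec{p}$ each item $a\in N_i$ is inspected once across all $i$, for a total of $\Oh(m^{\var_d-1}\cdot |N|)$ time, and the final scan over the last layer fits into this bound. I do not expect a substantive obstacle here, since the argument is essentially the cost/value dual of Proposition~\ref{pps:MCKP-XP-varc}; the only subtle point to double-check is that the $\sum p_j>i$ entries can be safely pruned to $+\infty$ without affecting the recurrence, which follows because they cannot correspond to any valid selection from $N_1,\dots,N_i$.
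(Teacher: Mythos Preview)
Your proposal is correct and follows essentially the same approach as the paper: both swap the roles of costs and values relative to Proposition~\ref{pps:MCKP-XP-varc}, index the DP table by the value-multiplicity vector~$\myvec{p}$, store the minimum total cost, and read off the answer by checking $G[m,\myvec{p}]\le B$ for some $\myvec{p}$ whose induced total value is at least~$D$. The only cosmetic differences are that you use~$+\infty$ where the paper uses the sentinel~$B+1$, and your recurrence correctly uses~$\min$ (the paper's printed recurrence~(\ref{eqn:MCKP-vard}) has an evident copy-paste typo with~$\max$).
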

\begin{proof}
	We describe a dynamic programming algorithm with a table that has a dimension for all the~$\var_d$ different values except for~$d_{\var_d}$.
	An entry~$F[i,p_1,\dots,p_{\var_d-1}]$ of the table stores the minimal cost of a set~$S$ that contains exactly one item of each set of~$N_1,\dots,N_i$ and contains exactly~$p_j$ items with value~$d_j$ for each~$j\in [\var_d-1]$. Subsequently,~$S$ contains exactly~$p_{\var_d}^{(i)} := i-\sum_{j=1}^{\var_d-1} p_j$ items with value~$d_{\var_d}$.
	If~$\sum_{j=1}^{\var_d-1} p_j \ge i$, then store~$F[i,p_1,\dots,p_{\var_d-1}] = B+1$.
	In the rest of the proof, by~$\myvec{p}$ we denote~$(p_1,\dots,p_{\var_d-1})$ and by~$\myvec{p}_{(j) +z}$ we denote~$(p_1,\dots,p_j+z,\dots,p_{\var_d-1})$ for an integer~$z$.
	Let~${\myvec{0}}$ denote the~$(\var_d-1)$-dimensional zero.
	
	For the initialization of the values for~$F[1,\myvec{p}]$, only size-1 subsets of~$N_1$ are considered.
	Thus, for every~$a\in N_1$ with value~$d(a) = d_j < d_{\var_d}$, store~$F[1,{\myvec{0}}_{(j) + 1}]=c(a)$.
	If~$N_1$ contains an item~$a$ with value~$d(a)=d_{\var_d}$, then store~$F[1,{\myvec{0}}]=c(a)$.
	For all other~$\myvec{p}$ store~$F[1,\myvec{p}]=B+1$.
	Once the entries for~$i$ classes have been computed, one may compute the entries for~$i+1$ classes using the recurrence
	\begin{eqnarray}
		\label{eqn:MCKP-vard}
		F[i+1,\myvec{p}] &=&
		\max_{a \in N_{i+1}}
		\left\{
		\begin{array}{ll}
			F[i,\myvec{p}_{(j) -1}] + c(a)
			& \text{if } d(a) = d_j < d_{\var_d} \text{ and } p_j \ge 1\\
			F[i,\myvec{p}] + c(a)
			& \text{if } d(a) = d_{\var_d}
		\end{array}
		\right.
		.
	\end{eqnarray}
	
	Return yes if~$F[m,\myvec{p}]\le B$ for~$p_{\var_d}^{(m)} \cdot d_{\var_d} + \sum_{i=1}^{\var_d-1} p_i \cdot d_i \ge D$ and return no, otherwise.

		The correctness and running time proof are completely analogous to the proof of Proposition~\ref{pps:MCKP-XP-varc}.
\end{proof}

By Propositions~\ref{pps:MCKP-XP-varc} and~\ref{pps:MCKP-XP-vard}, \MCKP is~\XP with respect to~$\var_c$ and~$\var_d$, respectively. In the following, we reduce instances of \MCKP to instances of~\ILPF with at most~$2^{\var_c+\var_d} \cdot \var_c$ variables. Since \ILPF instances with~$n$ variables and input length~$s$ can be solved using~$s \cdot n^{2.5n+o(n)}$ arithmetic operations~\cite{frank,lenstra}, this shows that~\MCKP is fixed-parameter tractable with respect to the combined parameter~$\var_c+\var_d$.
\begin{theorem}
	\label{thm:MCKP-ILPF}
	For each instance of~\MCKP, an equivalent instance of~\ILPF\linebreak with~$\Oh(2^{\var_c+\var_d} \cdot \var_c)$ variables can be computed in polynomial time. Thus,~\MCKP is \FPT with respect to~$\var_c+\var_d$.
\end{theorem}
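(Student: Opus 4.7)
The plan is to reduce an \MCKP instance to an equivalent \ILPF instance whose number of variables depends only on $\var_c+\var_d$, then invoke the Lenstra/Kannan theorem for \ILPF parameterized by the number of variables.

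First I would extend the domination reductions already stated for duplicated costs and duplicated values to eliminate \emph{any} item dominated within its own class. If a class contains items $a$ and $a'$ with $c(a)<c(a')$ and $d(a)\ge d(a')$, then $a'$ is dominated and may be deleted, as any solution using $a'$ can substitute $a$ without raising the cost or lowering the value. After this preprocessing, sorting the items of any class by cost yields a sequence whose values are also strictly increasing. Hence each class $N_i$ is described completely by a pair $(T_c,T_d)$ with $T_c\subseteq[\var_c]$, $T_d\subseteq[\var_d]$, and $|T_c|=|T_d|$: its items are obtained by sorting both sets and pairing them in order. I will call such a pair the \emph{type} of $N_i$.

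The number of types is
\[
\sum_{k=0}^{\min(\var_c,\var_d)}\binom{\var_c}{k}\binom{\var_d}{k}=\binom{\var_c+\var_d}{\var_c}\le 2^{\var_c+\var_d}
\]
by the Vandermonde identity. For each type $\tau=(T_c,T_d)$ let $n_\tau$ count the classes of type $\tau$, and for each $p\in T_c$ let $d_{q(\tau,p)}$ be the value of the unique item with cost $c_p$ in a class of type $\tau$. The ILP introduces, for each type $\tau$ and each $p\in T_c$, a nonnegative integer variable $x_{\tau,p}$ encoding how many classes of type $\tau$ pick the item with cost $c_p$. The constraints are $\sum_{p\in T_c(\tau)}x_{\tau,p}=n_\tau$ for every $\tau$ (each class contributes one item), $\sum_{\tau,p}c_p\,x_{\tau,p}\le B$ (budget), and $\sum_{\tau,p}d_{q(\tau,p)}\,x_{\tau,p}\ge D$ (value). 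Because classes of the same type are interchangeable, the ILP is feasible iff the original \MCKP instance admits a solution.

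The total number of variables is $\sum_\tau|T_c(\tau)|\le 2^{\var_c+\var_d}\cdot\var_c$, so invoking the Lenstra/Kannan bound of $s\cdot n^{2.5n+o(n)}$ for \ILPF yields an \FPT algorithm in $\var_c+\var_d$. The main obstacle is the monotonicity cleanup: without it, a class would correspond to an arbitrary partial matching between $[\var_c]$ and $[\var_d]$ and the type count would blow up to roughly $2^{\var_c\cdot\var_d}$; the monotone structure forced by domination is exactly what collapses the description of a class to a pair of \emph{sets}, after which the Vandermonde bound on the number of types gives the claimed size.
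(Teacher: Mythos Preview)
Your proposal is correct and follows essentially the same approach as the paper: group classes by their \emph{type} (the pair of cost-set and value-set), introduce one integer variable per (type, cost) pair counting how many classes of that type pick the item of that cost, impose the per-type cardinality, budget, and value constraints, and then invoke Lenstra/Kannan. The only differences are cosmetic refinements on your side---you make the domination cleanup explicit (the paper asserts the resulting monotonicity somewhat tersely), you use Vandermonde to bound the number of types by $\binom{\var_c+\var_d}{\var_c}$ rather than the cruder $2^{\var_c}\cdot 2^{\var_d}$, and you restrict variables to costs actually present in the type, whereas the paper creates a variable for every cost and forces the extraneous ones to zero via a large negative coefficient in the value constraint.
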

%It is known that~\KP is \FPT with respect to~$\var_c$~\cite{etscheid}. We were not able to use the technique for this result to prove that also \MCKP is \FPT with respect to $\var_c$.
\begin{proof}
	\pfpara{Reduction}
	We may assume that a class~$N_i$ does not contain two items of the same cost or the same value and that for items~$a,b\in N_i$ with~$c(a)<c(b)$ we have~$d(a)<d(b)$.
	Thus, each class~$N_i$ is fully described by the set of costs~$c(N_i)$ of the items in~$N_i$ and the set of values~$d(N_i)$ of the items in~$N_i$.
	In the following, we call~$T=(C,Q)$ a \textit{type}, for sets~$C\subseteq c(N), Q\subseteq d(N)$ with~$|C| = |Q|$. Let~$\Tree$ be the family of all types.
	We say that the \textit{class~$N_i$ is of type~$T=(c(N_i),d(N_i))$.}
	For each~$T\in\Tree$, let~$m_T$ be the number of classes of type~$T$.
	Clearly,~$\sum_{T\in\Tree} m_T = m$.
	
	Now, for each class~$N_j$ of type~$T=(C,Q)$, the cost~$c(a)$ of an item~$a\in N_j$ directly determines its value~$d(a)$. More precisely, if~$c(a)$ is the~$\ell$th cheapest cost in~$C$, then the value of~$a$ is the~$\ell$th smallest value in~$Q$. Accordingly, if~$c_i$ is the~$\ell$th smallest cost in~$C$, we let~$d_{T,i}$ denote the~$\ell$th smallest value in~$Q$. For all~$i\in [\var_c]$ where~$c_i \not\in C$, we define~$d_{T,i}:=-\sum_{i=1}^m \max d(N_i)$. 
	
	We now define the following instance of~\ILPF that is equivalent to~\Instance.
	The variable~$x_{T,i}$ expresses the number of items with cost~$c_i$ that are chosen in a class of type~$T$.
	\begin{align}
		\label{eqn:MCKP-ILPF-B}
		\sum_{T \in \mathcal{T}_C} \sum_{i=1}^{\var_c} x_{T,i} \cdot c_i \le & \; B\\
		\label{eqn:MCKP-ILPF-D}
		\sum_{T \in \mathcal{T}_C} \sum_{i=1}^{\var_c} x_{T,i} \cdot d_{T,i} \ge & \; D\\
		\label{eqn:MCKP-ILPF-Nr}
		\sum_{i=1}^{\var_c} x_{T,i} = & \; m_{T} & \quad \forall T\in \mathcal{T}\\
		\label{eqn:MCKP-ILPF-initial}
		x_{T,i} \ge & \; 0 & \quad \forall T\in \mathcal{T}, i\in [\var_c]
	\end{align}
	
	\pfpara{Correctness}
	Observe that if~$c_i\not\in C$, then Inequality~(\ref{eqn:MCKP-ILPF-D}) would not be fulfilled if~$x_{T,i}>0$  because we defined~$d_{T,i}$ to be~$-\sum_{i=1}^m \max d(N_i)$. Consequently,~$x_{T,i}=0$ if~$c_i\not\in C$ for each type~$T=(C,Q)\in\Tree$ and~$i\in[\var_c]$.
	Inequality~(\ref{eqn:MCKP-ILPF-B}) ensures that the total cost is at most~$B$.
	Inequality~(\ref{eqn:MCKP-ILPF-D}) ensures that the total value is at least~$D$.
	Equation~(\ref{eqn:MCKP-ILPF-Nr}) ensures that exactly~$m_T$ elements are picked from the classes of type~$T$ for each~$T\in \Tree$.
	Finally, observe that the constructed instance has~$\Oh(2^{\var_c+\var_d}\cdot \var_c)$ variables because~$\Tree\subseteq 2^{c(N)} \times 2^{d(N)}$ which gives~$\Oh(2^{\var_c+\var_d}\cdot \var_c)$ different options for the variables~$x_{T,i}$.
\end{proof}
%Observe that with the same technique, an instance of~\ILPF with~$2^{\var_c+\var_d}\cdot \var_d$ variables can be described.

\subsection{Hardness with respect to the Number of Classes}
Second, we contrast the algorithms by the following hardness results.
There is a reduction from \KP{} to~\MCKP in which each item in the instance of~\KP{} is added to a unique class with a further, new item that has no costs and no value~\cite{kellerer}.
This reduction constructs an instance with many classes.
In the following, we prove that~\MCKP is \W{1}-hard with respect to the number of classes~$m$, even if~$B=D$ and~$c(a)=d(a)$ for each~$a\in N$. This special case of~\MCKP is called~\textsc{Multiple-Choice \SubSum}~\cite{kellerer}.

To show the hardness, we reduce from \SubSum with multi-set input parameterized by~$k$.
\Problem{Subset Sum}
{A multi-set of integers~$Z=\{z_1,\dots,z_n\}$ and integers~$Q$ and~$k$}{Is there a multi-set~$S\subseteq Z$ such that~$\sum_{s\in S}=Q$ and~$|S|=k$}

In parameterized complexity, the problem is often defined for set inputs. The original W[1]-hardness proof for \SubSum relies on a reduction from the \textsc{Perfect Code} problem~\cite[Lemma~4.4]{DF95}. It is easy to observe that this reduction also works if every constructed integer is added~$k$ times to~$Z$. We will not repeat the details of the reduction but give a brief intuition. In the reduction, the target number~$Q$ has a value of one at each digit. Now, adding~$k$ copies of a number to~$Z$ in the construction maintains correctness because including any number twice in the solution produces carries in the summation which destroys the property that every digit has a value of~1.

\begin{proposition}[\cite{DF95}]
	\label{prop:MultiSubSum}
	\SubSum{} is \W{1}-hard with respect to~$k$, even when every integer in~$Z$ has multiplicity at least~$k$ in~$Z$. 
\end{proposition}

\begin{theorem}
	\label{thm:MCKP-m-W1}
	\MCKP is \XP and \W1-hard with respect to the number of classes~$m$.
\end{theorem}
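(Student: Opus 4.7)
For the XP upper bound I would simply enumerate: there are at most $\prod_{i=1}^m|N_i|\le|N|^m$ ways of choosing one item per class, each checkable against the cost and value thresholds in $\Oh(m)$ time, giving $\Oh(|N|^m\cdot m)$ total, which is polynomial for constant $m$.

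For the \W1-hardness I would give a parameterized reduction from \textsc{Multi-Colored Clique} (which is \W1-hard with respect to the clique size $k$) to \MCKP with $m=k+\binom{k}{2}$ classes, using only items with $c(a)=d(a)$ and setting $B=D$. Given an instance with graph $G=(V,E)$ on $n$ vertices and a vertex partition $V=V_1\cup\dots\cup V_k$, I fix an injection $\mathrm{id}\colon V\to[n]$, set $W:=n+1$ and $N:=2W+1$, and fix a distinct position $p(i,j)\in\mathbb{N}$ for every ordered pair $(i,j)\in[k]^2$ with $i\ne j$. I introduce a vertex class $Y_i$ for every $i\in[k]$ containing one item per vertex of $V_i$, and an edge class $Z_{ij}$ for every $1\le i<j\le k$ containing one item per edge of $G$ between $V_i$ and $V_j$. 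The item for $u\in V_i$ in $Y_i$ gets cost and value $\sum_{j\ne i}(W-\mathrm{id}(u))\cdot N^{p(i,j)}$; the item for $\{u,v\}$ with $u\in V_i$, $v\in V_j$ in $Z_{ij}$ gets cost and value $\mathrm{id}(u)\cdot N^{p(i,j)}+\mathrm{id}(v)\cdot N^{p(j,i)}$; and I set $B=D=W\cdot\sum_{i\ne j}N^{p(i,j)}$.

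To prove correctness I would read the base-$N$ digits one slot at a time. Since $N>2W$ and every position $p(i,j)$ is touched by at most two selected items, there are no carries, so the total equals $B=D$ iff every slot contributes exactly $W$. At $p(i,j)$ with $i<j$ the only contributions come from the chosen vertex $u_i\in Y_i$, contributing $W-\mathrm{id}(u_i)$, and from the $V_i$-endpoint $x$ of the chosen edge in $Z_{ij}$, contributing $\mathrm{id}(x)$, summing to $W$ iff $\mathrm{id}(u_i)=\mathrm{id}(x)$; the analogous argument at $p(j,i)$ pins down the $V_j$-endpoint. Hence a feasible selection exists iff the chosen vertices $u_1,\dots,u_k$ together with the chosen edges form a $k$-clique in $G$. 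Since $m\in\Oh(k^2)$ this is an fpt-reduction. The main obstacle is engineering the positional encoding so that a single equality captures all $k(k-1)$ consistency constraints; the two ideas that make it work are splitting each vertex item across its $k-1$ incident slots using the complementary weight $W-\mathrm{id}(u)$ (so each slot reduces to the clean equation $a+b=W$), and picking the base $N>2W$ to prevent carries between slots.
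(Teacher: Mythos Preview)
Your proof is correct, but takes a genuinely different route from the paper's.

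For the \XP part you and the paper do essentially the same brute-force enumeration.

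For the \W1-hardness, the paper reduces from \textsc{Multi-Selectable Subset Sum} (given integers $z_1,\dots,z_n$ and targets $Q,k$, decide whether $\sum q_iz_i=Q$ with $\sum q_i=k$), which Abboud, Lewi, and Williams showed to be \W1-hard in $k$. Their construction simply creates $k$ identical classes $N_1,\dots,N_k$, puts an item for each $z_j$ into every class with cost and value $2^{\ell i}+z_j$, and sets $B=D=Q+\sum_i 2^{\ell i}$; the offsets $2^{\ell i}$ merely guarantee that exactly one item per class is counted. This yields $m=k$.

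Your reduction starts instead from \textsc{Multi-Colored Clique} and uses $m=k+\binom{k}{2}$ classes together with a positional base-$N$ encoding where each ordered pair $(i,j)$ owns a digit; the complementary weights $W-\mathrm{id}(u)$ turn vertex–edge consistency into a per-digit equation $a+b=W$, and $N>2W$ kills carries. This is sound: each digit is touched by exactly one vertex class and one edge class, the digit sums stay below $N$, and since $c=d$ with $B=D$ the total must hit $B$ exactly, forcing all $k(k-1)$ consistency constraints simultaneously.

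The trade-off: the paper's argument is shorter and yields the tighter parameter $m=k$, but relies on a less widely cited \W1-hardness source. Your argument is self-contained from the canonical \textsc{Multi-Colored Clique} at the price of a quadratic blow-up $m=\Theta(k^2)$ and a more elaborate encoding. Both land in the \textsc{Multiple-Choice Subset Sum} special case ($c=d$, $B=D$), so the downstream corollaries of the paper still go through with your version.
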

\begin{proof}
	The \XP-algorithm follows from the fact that we may simply try all~$|N|^m$ possibilities to select a set~$S$ of items and compute the value and cost for each of them.
	
	\pfpara{Reduction}
	We reduce from~\SubSum with multi-set inputs~$Z$ where every integer~$z$ occurs~$k$ times.
	Let~$\Instance=(Z,Q,k)$ be such an instance of~\SubSum
	and assume, without loss of generality, that~$Z=\{z_{i,j}\mid i\in n , j\in [k]\}$ such that~$z_{i,1}=z_{i,2}=\ldots =z_{i,k}$ for all~$i\in [n]$. 
	
	We define~$N_j:=\{a_{i,j}\mid i\in [n]\}$ for each~$j\in [k]$ and set~$c(a_{i,j})=d(a_{i,j})=z_{i,j}$ for each element $a_{i,j}$. Then~$\Instance'$ is~$(N,\{N_1,\ldots, N_k\},Q,Q)$ where~$N$ is the union of the sets~$N_j$,~$j\in [k]$. Observe that the reduction can be performed in polynomial time and that~$m=k$, so the reduction is indeed a parameterized reduction from \SubSum with parameter~$k$ to \MCKP with parameter~$m$. 
	
	\pfpara{Correctness}
	We show that~\Instance is a yes-instance if and only if~\Instance' is a yes-instance.
	
	($\Rightarrow$) Let~$S=\{z_{i_1},\ldots,z_{i_k}\}$ be a solution for~$\Instance$. By construction,~each set~$N_j$ contains one element~$a_{\ell_j}$ such that~$c(a_{\ell_j})=d(a_{\ell_j})=z_{i_j}$. Then, the set~$S':=\{a_{\ell_1}, a_{\ell_2},\ldots, a_{\ell_k} \}$ is a solution for~$\Instance'$ since $$\sum_{j\in [k]} c(a_{\ell_j}) = \sum_{j\in [k]} d(a_{\ell_j}) = \sum_{j\in [k]} z_{i_j}=Q. $$
	
	($\Leftarrow$)
	Conversely, let~$S':=\{a_{\ell_1}, a_{\ell_2},\ldots, a_{\ell_k} \}$ be a solution for~$\Instance'$. By construction, for each~$j\in [k]$, there is a number~$z_{i_j}\in Z$ such that~$c(a_{\ell_j})=z_{i_j}$.  Then,~$S:=\{z_{i_j}\mid j\in [k] \}$ is a solution for~$\Instance$ since
	$$Q=\sum_{j\in [k]} c(a_{\ell_j}) = \sum_{j\in [k]}  z_{i_j}. $$
\end{proof}

\section{The Generalized Noah's Ark Problem}
\label{sec:GNAP}
We now come back to the maximization of phylogenetic diversity, starting with the most general problem variant, \GNAPLong (\GNAP).
Subsequently, we turn to \SGNAP, making use of the tractability and hardness results for \MCKP established above.

\subsection{Algorithms for the Generalized Noah's Ark Problem}

In Theorem~\ref{thm:varc+varw}, we now show that \GNAP can be solved in polynomial time when the number of different project costs and the number of different survival probabilities is constant.
In the following, let~$\Instance=(\Tree,\mathcal P,B,D)$ be an instance of~\GNAP, and let~$\mathcal C:=\{c_1,\dots,c_{\var_c}\}$ and~$\mathcal W:=\{w_1,\dots,w_{\var_w}\}$ denote the sets of different costs and different \sprobs in~\Instance, respectively.
Without loss of generality, assume~$c_i<c_{i+1}$ for each~$i\in[\var_c-1]$ and likewise assume~$w_j<w_{j+1}$ for each~$j\in [\var_w-1]$. In other words,~$c_i$ is the~$i$th cheapest cost in~$\mathcal C$, and~$w_j$ is the~$j$th smallest \sprob in~$\mathcal W$.
Recall, that we assume that there is at most one item with cost~$c_p$ and at most one item with \sprob~$w_q$ in every project list~$P_i$, for each~$p\in[\var_c]$ and~$q\in[\var_w]$.
For the rest of the section, by~$\myvec{a}$ and~$\myvec{b}$ we denote~$(a_1,\dots,a_{\var_c-1})$ and~$(b_1,\dots,b_{\var_w-1})$, respectively.
Further, we let~$\myvec{p}_{(j) +z}$ denote the vector~$\myvec{p}$ in which, at position~$i$, the value~$z$ is added, and we let~${\myvec{0}}$ denote the~$(\var_c-1)$-dimensional zero.

\begin{theorem}
	\label{thm:varc+varw}
	\GNAP can be solved in~$\Oh\left(|X|^{2(\var_c+\var_w-1)}\cdot (\var_c+\var_w)\right)$ time.
\end{theorem}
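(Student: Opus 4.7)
The plan is to design a bottom-up tree dynamic programming algorithm whose table, at each subtree~$T_v$, records how many of the leaves of that subtree have been assigned each cost in~$\mathcal{C}$ and each \sprop in~$\mathcal{W}$. Specifically, for each vertex~$v$ of~$\Tree$ I will maintain a table $F_v[\myvec{a},\myvec{b}]$ indexed by vectors $\myvec{a}=(a_1,\dots,a_{\var_c-1})\in[|X|]_0^{\var_c-1}$ and $\myvec{b}=(b_1,\dots,b_{\var_w-1})\in[|X|]_0^{\var_w-1}$, whose value is the maximum phylogenetic diversity produced by the edges strictly inside~$T_v$, taken over project assignments to~$\off(v)$ in which exactly~$a_p$ leaves receive a project of cost~$c_p$ for each~$p<\var_c$ and exactly~$b_q$ leaves receive a project of \sprop~$w_q$ for each~$q<\var_w$. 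The remaining counts for cost~$c_{\var_c}$ and \sprop~$w_{\var_w}$ are uniquely recoverable from~$|\off(v)|$, so the reduced representation loses no information while keeping the state space at $\Oh(|X|^{\var_c+\var_w-2})$ per vertex. At a leaf~$x_i$, $F_{x_i}$ is initialized to~$-\infty$ everywhere except at the positions corresponding to the projects in~$P_i$, where the value is~$0$.

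For an internal vertex~$v$ with children~$w_1,\dots,w_t$ I will assemble~$F_v$ child by child, using the $i$-partial subtree~$T_{v,i}$ from the preliminaries. Writing~$G_i[\myvec{a},\myvec{b}]$ for the analogous table on~$T_{v,i}$, starting from~$G_0\equiv-\infty$ except $G_0[\myvec{0},\myvec{0}]=0$, the recurrence is
\begin{align*}
G_i[\myvec{a},\myvec{b}] = \max_{\substack{\myvec{a}'+\myvec{a}''=\myvec{a}\\ \myvec{b}'+\myvec{b}''=\myvec{b}}}\Bigl\{\,& G_{i-1}[\myvec{a}',\myvec{b}']+F_{w_i}[\myvec{a}'',\myvec{b}'']\\
&+\lambda((v,w_i))\cdot\Bigl(1-\prod_{q=1}^{\var_w}(1-w_q)^{b_q''}\Bigr)\Bigr\},
\end{align*}
where $b_{\var_w}''=|\off(w_i)|-\sum_{q<\var_w}b_q''$. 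The crucial observation that makes the recurrence sound is that in the $PD_\Tree$ formula the contribution of the edge~$(v,w_i)$ depends only on the \sprops chosen for the leaves of~$\off(w_i)$, which are exactly captured by~$\myvec{b}''$; all edges inside~$T_{v,i-1}$ and inside~$T_{w_i}$ are already tabulated in~$G_{i-1}$ and~$F_{w_i}$. Setting~$F_v=G_t$ completes the computation at~$v$, and the algorithm returns yes iff the root~$r$ admits a pair~$(\myvec{a},\myvec{b})$ with $F_r[\myvec{a},\myvec{b}]\ge D$ whose induced total cost $\sum_{p<\var_c}a_p c_p+(|X|-\sum_{p<\var_c}a_p)\cdot c_{\var_c}$ is at most~$B$.

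Correctness follows by a bottom-up induction on~$\Tree$: the leaf case matches the definition directly, and at an internal vertex every project assignment to the leaves of~$T_{v,i}$ splits uniquely into an assignment to the leaves of~$T_{v,i-1}$ and one to~$\off(w_i)$, with disjoint count vectors that sum to~$\myvec{a}$ and~$\myvec{b}$ and with a PD contribution that decomposes into exactly the three terms in the recurrence. For the running time, each vertex has $\Oh(|X|^{\var_c+\var_w-2})$ table entries, a single child-combination step iterates over pairs of substates in $\Oh(|X|^{2(\var_c+\var_w-2)})$ time, the edge-contribution product is evaluated in $\Oh(\var_c+\var_w)$ arithmetic operations in the unit-cost RAM model, and the total number of combination steps over~$\Tree$ equals $|V|-1\in\Oh(|X|)$; multiplying these factors gives a bound that fits within $\Oh(|X|^{2(\var_c+\var_w-1)}\cdot(\var_c+\var_w))$.

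The main obstacle will be the bookkeeping in the combination step: the per-edge factor $1-\prod_q(1-w_q)^{b_q''}$ has to be computed using only the \sprop counts of the newly added child's subtree (the vector~$\myvec{b}''$) and \emph{not} those of the already accumulated partial subtree, and one must correctly translate the reduced $(\var_c-1)$- and $(\var_w-1)$-tuples back to their implicit final coordinates when summing costs and when evaluating the product. Once these are handled the scheme is a tree-structured generalization of Propositions~\ref{pps:MCKP-XP-varc} and~\ref{pps:MCKP-XP-vard} that uses the $i$-partial subtree decomposition already introduced in the preliminaries.
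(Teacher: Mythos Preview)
Your proposal is correct and follows essentially the same approach as the paper: a bottom-up dynamic program indexed by the reduced count vectors $\myvec{a}\in[|X|]_0^{\var_c-1}$ and $\myvec{b}\in[|X|]_0^{\var_w-1}$, combined child by child via the $i$-partial subtrees, with the edge $(v,w_i)$'s contribution computed from the \sprop{} counts~$\myvec{b}''$ of the newly merged child. The only cosmetic difference is that you start the partial table at $G_0[\myvec{0},\myvec{0}]=0$ whereas the paper initializes directly at $G[v,1,\cdot,\cdot]$ from~$F_{u_1}$ plus the first edge; unfolding your recurrence once recovers exactly their base case, and the correctness and running-time arguments match.
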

\begin{proof}
	We describe a dynamic programming algorithm with two tables,~$F$~and~$G$, that have a dimension for all the~$\var_c$ different costs, except for~$c_{\var_c}$ and all the~$\var_w$ different \sprobs, except for~$\var_w-1$.
	Recall that for a vertex $v$ with $t$ children, $T_v$ is the subtree rooted at $v$ and
	the offspring $\off(v)$ of $v$ are the leaves in $T_v$.
	For a vertex~$v$ with children $w_1,\dots,w_t$ where~$w_i$ denotes the~$i$th child, the \textit{$i$-partial subtree $T_{v,i}$ rooted at~$v$} for $i\in [t]$ is the  connected component containing~$v$ in~$T_v - \{(v,w_{i+1}), \ldots, (v,w_{t})\}$.
	For a vertex~$v\in V$ and given vectors~\myvec{a} and~\myvec{b}, we define~$\mathcal S^{(v)}_{\myvec{a},\myvec{b}}$ to be the family of sets of projects~$S$ such that
	\begin{enumerate}
		\item $S$ contains exactly one project of~$P_i$ for each~$x_i\in \off(v)$,
		\item $S$ contains exactly~$a_k$ projects of cost~$c_k$ for each~$k \in [\var_c-1]$, and
		\item $S$ contains exactly~$b_\ell$ projects of \sprob~$w_\ell$ for each~$\ell \in [\var_w-1]$.
	\end{enumerate}
	For a vertex~$v\in V$ with children~$u_1,\dots,u_t$, given vectors~\myvec{a}, and~\myvec{b} and a given integer~$i\in[t]$ we define~$\mathcal S^{(v,i)}_{\myvec{a},\myvec{b}}$ analogously, just that exactly one project of $P_i$ is chosen for each~$x_i\in \off(u_1)\cup\dots\cup\off(u_i)$.
	
	It follows that we can compute how many projects with cost~$c_{\var_c}$ and \sprob~$w_{\var_w}$ a set~$S\in \mathcal S^{(v)}_{\myvec{a},\myvec{b}}$ contains. There are~$a_{\var_c}^{(v)} := |\off(v)|-\sum_{j=1}^{\var_c-1} a_j$ projects with cost~$c_{\var_c}$ and~$b_{\var_w}^{(v)} := |\off(v)|-\sum_{j=1}^{\var_w-1} b_j$ projects with \sprob~$w_{\var_w}$.
	The entries~$F[v,\myvec{a},\myvec{b}]$ and~$G[v,i,\myvec{a},\myvec{b}]$ store the maximum expected phylogenetic diversity of the tree $T_v$ for~$S\in \mathcal S^{(v)}_{\myvec{a},\myvec{b}}$ and $T_{v,i}$ for~$S\in\mathcal S^{(v,i)}_{\myvec{a},\myvec{b}}$, respectively.
	We further define the total \sprob to be~$w(b_{\var_w},\myvec{b}) := 1-
	(1-w_{\var_w})^{b_{\var_w}}
	\cdot
	\prod_{i=1}^{\var_w-1}
	(1-w_i)^{b_i}$,
	when~$b_{\var_w}$ and~$\myvec{b}$ describe the number of chosen single~\sprobs.
	
	Fix a taxon~$x_i$ with project list~$P_i$.
	As we want to select exactly one project of $P_i$, the project is clearly defined by $\vec a$ and $\vec b$.
	So, we store~\mbox{$F[x_i,\myvec{a},\myvec{b}] = 0$}, if~$P_i$ contains a project~$p = (c_k,w_\ell)$ such that
	\begin{enumerate}
		\item ($k<\var_c$ and~$\myvec{a}={\myvec{0}}_{(k)+1}$ or~$k=\var_c$ and~$\myvec{a}={\myvec{0}}$), and
		\item ($\ell<\var_w$ and~$\myvec{b}={\myvec{0}}_{(\ell)+1}$ or~$\ell=\var_w$ and~$\myvec{b}={\myvec{0}}$).
	\end{enumerate}
	Otherwise, store~$F[x_i,\myvec{a},\myvec{b}] = -\infty$.
	
	Let~$v$ be an internal vertex with children~$u_1,\dots,u_t$, we define
	\begin{eqnarray}
		\label{eqn:varc+varw-G-1}
		G[v,1,\myvec{a},\myvec{b}]
		&=&
		F[u_1,\myvec{a},\myvec{b}]
		+ \lambda(v,u_1) \cdot
		w\left(b_{\var_w}^{(u_{1})},\myvec{b}\right)
	\end{eqnarray}
	and to compute further values of~$G$, we can use the recurrence
	\begin{eqnarray}
		\label{eqn:varc+varw-G-i+1}
		G[v,i+1,\myvec{a},\myvec{b}]
		&=&
		\max_{
			\begin{array}{l}
				{\myvec{0}}\le \myvec{a'}\le \myvec{a}\\
				{\myvec{0}}\le \myvec{b'}\le \myvec{b}
		\end{array}}
		\left\{
		\begin{array}{l}
			G[v,i,\myvec{a}-\myvec{a'},\myvec{b}-\myvec{b'}]
			+ F[u_{i+1},\myvec{a'},\myvec{b'}]\\
			+ \lambda(v,u_{i+1}) \cdot
			w\left(b_{\var_w}^{(u_{i+1})},\myvec{b'}\right)
		\end{array}
		\right.
		.
	\end{eqnarray}
	Herein, we write~$\myvec{p}\le \myvec{q}$ if~$\myvec{p}$ and~$\myvec{q}$ have the same dimension~$d$ and~$p_i\le q_i$ for every~$i\in[d]$.
	And finally, we define~$F[v,\myvec{a},\myvec{b}] = G[v,t,\myvec{a},\myvec{b}]$.
	
	Return yes if there are~$\myvec{a}$ and~$\myvec{b}$ such that~$\sum_{i=1}^{\var_c-1} a_i \le |X|$, and~$\sum_{i=1}^{\var_w-1} b_i \le |X|$, and~$a_{\var_c}^{(r)} \cdot c_{\var_c} + \sum_{i=1}^{\var_c-1} a_i \cdot c_i \le B$, and~$F[r,\myvec{a},\myvec{b}]\ge D$ where~$r$ is the root of~$\Tree$. Otherwise, return no.
	
	\pfpara{Correctness}
	For any~$v$,~\myvec{a},~\myvec{b} and~$i$, we prove that~$F[v,\myvec{a},\myvec{b}]$ and~$G[v,i,\myvec{a},\myvec{b}]$ store~$\max PD_{\Tree_v}(S)$ for~$S\in \mathcal S^{(v)}_{\myvec{a},\myvec{b}}$ and~$S\in \mathcal S^{(v,i)}_{\myvec{a},\myvec{b}}$, respectively. This implies that the algorithm is correct.
	For a taxon~$x_i$, the tree $T_{x_i}$ does not contain edges and so there is no diversity. We can only check if \myvec a and \myvec b correspond to a feasible project. So, the table~$F$ stores the correct value by the initialization.
	For an internal vertex~$v$, the children~$u_1,\dots,u_t$ of~$v$ and~$i\in [t-1]$,
	the entry~$G[v,1,\myvec{a},\myvec{b}]$ stores the correct value by the observations that~$PD_{\Tree_{v,1}}(S) = PD_{\Tree_{u_1}}(S) + \lambda(v,u_1) \cdot w(b_{\var_w}^{(u_{i+1})},\myvec{b})$ for~$S\in \mathcal S^{(v,1)}_{\myvec{a},\myvec{b}}$, where~$w(b_{\var_w}^{(u_{i+1})},\myvec{b})$ is the \sprob at~$u_1$.
	Further, the value in entry~$F[v,\myvec{a},\myvec{b}]$ stores the correct value, when~$G[v,t,\myvec{a},\myvec{b}]$ stores the correct value, because~$\mathcal S^{(v)}_{\myvec{a},\myvec{b}}=\mathcal S^{(v,t)}_{\myvec{a},\myvec{b}}$.
	It remains to show the correctness of the value in~$G[v,i+1,\myvec a,\myvec b]$.
	
	Now, assume as an induction hypothesis that the computation of~$F[u_j,\myvec a,\myvec b]$ and~$G[v,i,\myvec a,\myvec b]$ store the correct values, for an internal vertex~$v$ with children~$u_1,\dots,u_t$ and~$i\in [t-1]$.
	We first prove that if~$G[v,i+1,\myvec{a},\myvec{b}]=p$, then there exists a set~$S\in \mathcal S^{(v,i+1)}_{\myvec{a},\myvec{b}}$ with~$PD_{\Tree_{v}}(S)=p$.
	Afterward, we prove that~$G[v,i+1,\myvec{a},\myvec{b}]\ge PD_{\Tree_{v,i+1}}(S)$ for every set~$S\in \mathcal S^{(v,i+1)}_{\myvec{a},\myvec{b}}$.
	
	($\Rightarrow$)
	Let~$G'[v,i+1,\myvec{a},\myvec{b}]=d$.
	Let~\myvec{a'} and~\myvec{b'} be the vectors that maximize the right side of Equation~(\ref{eqn:varc+varw-G-i+1}) for~$G[v,i+1,\myvec{a},\myvec{b}]$.
	By the induction hypothesis, there is a set~$S_G\in S^{(v,i)}_{\myvec{a}-\myvec{a'},\myvec{b}-\myvec{b'}}$ such that~$G[v,i,\myvec{a}-\myvec{a'},\myvec{b}-\myvec{b'}]=PD_{\Tree_{v,i}}(S_G)$ and there is a set~$S_F\in S^{(u_{i+1})}_{\myvec{a'},\myvec{b'}}$ such that~$F[u_{i+1},i,\myvec{a'},\myvec{b'}]=PD_{\Tree_{u_{i+1}}}(S_F)$.
	Define~$S := S_G \cup S_F$.
	Then,
	\begin{eqnarray}
		PD_{\Tree_{v,i+1}}(S) &=& PD_{\Tree_{v,i}}(S_G) + PD_{\Tree_{v}}(S_F)\\
		\label{eqn:varc+varw-RA2}
		&=& PD_{\Tree_v}(S_G) + PD_{\Tree_{u_{i+1}}}(S_F) + \lambda(v,u_{i+1}) \cdot w(b_{\var_w}^{(u_{i+1})},\myvec{b}).
	\end{eqnarray}
	This equals the right side of Equation~(\ref{eqn:varc+varw-G-i+1}) and we conclude~$PD_{\Tree_v}(S)=p$.
	
	($\Leftarrow$)
	Let~$S\in \mathcal \mathcal S^{(v,i+1)}_{\myvec{a},\myvec{b}}$.
	Let~$S_F$ be the set of projects~$p$ of~$S$ that are from a project list of an offspring of~$u_{i+1}$ and define~$S_G = S\setminus S_F$.
	Let~$a_k$ be the number of projects in~$S_F$ with cost~$c_k$ and let~$b_\ell$ be the number of projects in~$S_F$ with \sprob~$b_\ell$. Define~$\myvec{a'}=(a_1,\dots,a_{\var_c-1})$ and~$\myvec{b'}=(b_1,\dots,b_{\var_w-1})$.
	Then,
	\begin{eqnarray}
		\label{eqn:varc+varw-LA1}
		G[v,i+1,\myvec a,\myvec b] &\ge & G[v,i,\myvec a-\myvec{a'},\myvec b-\myvec{b'}] + F[u_{i+1},\myvec{a'},\myvec{b'}]\\
		\nonumber
		&& + \lambda(v,u_{i+1}) \cdot w\left({b'}_{\var_w}^{(u_{i+1})},\myvec{b'}\right)\\
		\label{eqn:varc+varw-LA2}
		&=& PD_{\Tree_v}(S_G) + PD_{\Tree_{u_{i+1}}}(S_F) + \lambda(v,u_{i+1}) \cdot w\left({b'}_{\var_w}^{(u_{i+1})},\myvec{b'}\right)\\
		\label{eqn:varc+varw-LA3}
		&=& PD_{\Tree_v}(S)
	\end{eqnarray}
	Inequality~(\ref{eqn:varc+varw-LA1}) follows from the recurrence in Equation~(\ref{eqn:varc+varw-G-i+1}).
	By the definition of~$S_F$ and~$S_G$, Equation~(\ref{eqn:varc+varw-LA2}) is correct.
	Finally, Equation~(\ref{eqn:varc+varw-LA3}) follows from Equation~(\ref{eqn:varc+varw-RA2}).

	\pfpara{Running time}
	First, we prove how many options for vectors~$\myvec{a}$ and \myvec{b} there are:\linebreak
	Because~$\sum_{i=1}^{\var_c-1} a_i \le |X|$, we conclude that if~$a_i=|X|$, then~$a_j=0$ for~$i\ne j$.
	Otherwise, for~$a_i \in [|X|-1]_0$ there are~$\Oh(|X|^{\var_c-1})$ options for \myvec a of not containing~$|X|$, such that there are altogether~$\Oh(|X|^{\var_c-1}+|X|)=\Oh(|X|^{\var_c-1})$ options for a suitable~\myvec a. Likewise, there are~$\Oh(|X|^{\var_w-1})$ options for a suitable~\myvec b.
	
	Clearly, the initialization can be done in~$\Oh(\numP)$~time.
	Let~$v$ be an internal vertex and let~$\myvec{a}$ and~$\myvec{b}$ be fixed.
	For a vertex~$w\in V$, we can compute in~$\Oh(n)$~time the set~$\off(w)$.
	It follows that~$w(b_{\var_w}^{(u_{i})},\myvec{b})$ can be computed in~$\Oh(n+\var_w)$~time such that Equation~(\ref{eqn:varc+varw-G-1}) can be computed in~$\Oh(n+\var_w)$~time.
	As for~$\myvec{a}$ and \myvec{b}, there are~$\Oh\left(|X|^{\var_c+\var_w-2}\right)$ options to chose~$\myvec{a'}$ and \myvec{b'}.
	Therefore,~Equation~(\ref{eqn:varc+varw-G-i+1}) can be evaluated in $\Oh\left(|X|^{\var_c+\var_w-2}\cdot (n+\var_w)\right)$~time.
	
	Equation~(\ref{eqn:varc+varw-G-1}) has to be computed once for every internal vertex.
	Equation~(\ref{eqn:varc+varw-G-i+1}) has to be computed once for every vertex except the root.
	Thus, all entries of the tables~$F$ and~$G$ can be computed in time~$\Oh\left(|X|^{2(\var_c+\var_w-2)}\cdot (n+\var_w) + |X| \cdot \numP\right)$.
	Additionally, we need $\Oh(|X|^{\var_c+\var_w-2}\cdot (\var_c+\var_w))$~time to check whether there are~$\myvec{a}$ and~$\myvec{b}$ such that
	\begin{itemize}
		\item $F[r,\myvec{a},\myvec{b}]\ge D$,
		\item $\sum_{i=1}^{\var_c-1} a_i \le |X|$,
		\item $\sum_{i=1}^{\var_w-1} b_i \le |X|$, and
		\item $a_{\var_c}^{(r)} \cdot c_{\var_c} + \sum_{i=1}^{\var_c-1} a_i \cdot c_i \le B$.
	\end{itemize}
	Because~$\Oh(n)=\Oh(|X|)$ and~$\Oh(\numP)\le \Oh(|X| \cdot \var_w)$, the overall running time\linebreak is~$\Oh\left(|X|^{2(\var_c+\var_w-1))}\cdot (\var_c+\var_w)\right)$.
\end{proof}

For~\NAP[0]{c_i}{1}{2}, where $\var_w=2$, we get the following running time.
\begin{corollary}
	\label{cor:wcode=1-varc}
	\NAP[0]{c_i}{1}{2} can be solved in~$\Oh\left(|X|^{2(\var_c+1)}\cdot \var_c\right)$ time.
\end{corollary}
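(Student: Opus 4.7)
The plan is to invoke the dynamic-programming algorithm of Theorem~\ref{thm:varc+varw} in the special case $\var_w = 2$ with $\mathcal W = \{0,1\}$, and observe a simplification in evaluating the survival term that lets us shave off the factors depending on $\var_w$.

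First, I would note that since $\mathcal W = \{0,1\}$, the vector $\vec b$ becomes one-dimensional: it is a single integer $b_1 \in \{0,\ldots,|\off(v)|\}$ counting how many offspring chose the zero-probability project, and the remaining $b_{\var_w}^{(v)} = |\off(v)| - b_1$ offspring chose the probability-$1$ project. Consequently, the quantity $w(b_{\var_w}^{(u_i)}, \vec b)$, which in general needs $\Oh(n + \var_w)$ time to evaluate, here takes only two values: it is $1$ whenever at least one leaf of $\off(u_i)$ picks the probability-$1$ project, and $0$ otherwise. This can be determined in $\Oh(1)$ time from the counter $b_{\var_w}^{(u_i)}$ alone, without any multiplication of probabilities.

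Next, I would reuse the tables $F[v,\vec a,\vec b]$ and $G[v,i,\vec a,\vec b]$ together with the recurrences (\ref{eqn:varc+varw-G-1}) and (\ref{eqn:varc+varw-G-i+1}) unchanged. The correctness argument of Theorem~\ref{thm:varc+varw} carries over verbatim, so only the running time bound has to be re-examined. With $\var_w = 2$, there are $\Oh(|X|^{\var_c - 1})$ options for $\vec a$ and $\Oh(|X|)$ options for $\vec b$; the recurrence for $G$ ranges over $\Oh(|X|^{\var_c})$ choices of $(\vec{a'},\vec{b'})$; there are $\Oh(|X|)$ pairs $(v,i)$; and the $w$-term is now constant-time. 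Combining these factors and allowing a final $\Oh(\var_c)$ for the feasibility and budget checks across all vectors $(\vec a,\vec b)$ yields overall $\Oh(|X|^{2(\var_c+1)} \cdot \var_c)$ time, matching the claimed bound.

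The main obstacle is purely bookkeeping: verifying that every place in the analysis of Theorem~\ref{thm:varc+varw} that originally paid $\Oh(n + \var_w)$ for evaluating $w$ can safely be replaced by $\Oh(1)$, and that after setting $\var_w = 2$ the remaining combinatorial factors collapse to $|X|^{2(\var_c+1)} \cdot \var_c$. Everything else in the algorithm, including the initialization and the definition of $\mathcal S^{(v)}_{\vec a, \vec b}$, specializes without modification.
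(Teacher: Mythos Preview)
The proposal is correct and follows essentially the same approach as the paper: specialize the dynamic program of Theorem~\ref{thm:varc+varw} to $\var_w=2$ and observe that the survival term $w(b_{\var_w}^{(u_i)},\vec b)\in\{0,1\}$ can be read off in constant time. The paper's own justification consists of precisely these two observations stated in one sentence each, so your write-up is simply a more detailed rendering of the same argument.
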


As each project with a cost higher than~$B$ can be deleted, we may assume that there are no such projects which implies that~$\var_c\le C+1\le B+1$.
Thus, Theorem~\ref{thm:varc+varw} also implies that~\GNAP{} is \XP with respect to~$C+\var_w$ and~$B+\var_w$ with astronomical running times of $\Oh\left(|X|^{2(C+\var_w-1)}\cdot (C+\var_w)\right)$ and $\Oh\left(|X|^{2(B+\var_w-1)}\cdot (B+\var_w)\right)$, respectively.
However, we can adjust the algorithm so that~$B$ is not in the exponent of the running time.
Instead of declaring how many projects of cost $c_i$ for $i\in [\var_c]$ are selected, we declare the budget that can be spent.

\begin{theorem}
	\label{thm:B+varw}
	\GNAP can be solved in~$\Oh\left(B^2\cdot |X|^{2(\var_w-1)}\cdot \var_w\right)$ time.
\end{theorem}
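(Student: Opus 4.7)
The plan is to adapt the dynamic programming algorithm from Theorem~\ref{thm:varc+varw} by replacing the $(\var_c-1)$-dimensional cost-count vector~$\myvec{a}$ by a single scalar $b\in[B]_0$ that directly records the total cost spent on the projects selected for the leaves in the current (partial) subtree. Feasibility of a solution depends only on the total cost, so this substitution loses no information while shrinking the cost-related state space from $\Oh(|X|^{\var_c-1})$ down to $\Oh(B)$, independently of how many distinct cost values occur in the input. Consequently, the new tables are indexed by a vertex~$v$, an optional child position~$i$, a budget~$b\in[B]_0$, and the familiar \sprop{} vector $\myvec{b}=(b_1,\dots,b_{\var_w-1})$ from the proof of Theorem~\ref{thm:varc+varw}.

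Formally, I would define $F[v,b,\myvec{b}]$ to be the maximum of $PD_{\Tree_v}(S)$ over all project sets $S$ that pick exactly one project per taxon in $\off(v)$, have total cost exactly~$b$, and contain exactly $b_\ell$ projects of \sprop~$w_\ell$ for every $\ell<\var_w$; $G[v,i,b,\myvec{b}]$ is defined analogously on the $i$-partial subtree~$T_{v,i}$. Initialization at a leaf~$x_i$ sets $F[x_i,b,\myvec{b}]=0$ whenever $P_i$ contains a project of cost exactly~$b$ whose \sprop{} is consistent with~$\myvec{b}$ under the same indicator conditions as in the initialization of Theorem~\ref{thm:varc+varw}, and $-\infty$ otherwise. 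For an internal vertex~$v$ with children $u_1,\dots,u_t$ the recurrence is
\begin{equation*}
G[v,i{+}1,b,\myvec{b}]=\max_{\substack{0\le b'\le b\\ {\myvec{0}}\le\myvec{b'}\le\myvec{b}}}\!\left(G[v,i,b-b',\myvec{b}-\myvec{b'}]+F[u_{i+1},b',\myvec{b'}]+\lambda((v,u_{i+1}))\cdot w\!\left(b_{\var_w}^{(u_{i+1})},\myvec{b'}\right)\right),
\end{equation*}
together with $G[v,1,b,\myvec{b}]=F[u_1,b,\myvec{b}]+\lambda((v,u_1))\cdot w(b_{\var_w}^{(u_1)},\myvec{b})$ and $F[v,b,\myvec{b}]=G[v,t,b,\myvec{b}]$. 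Finally, the algorithm answers yes iff $F[r,b,\myvec{b}]\ge D$ for some $b\le B$ and some valid $\myvec{b}$ with $\sum_{\ell=1}^{\var_w-1}b_\ell\le|X|$, where $r$ is the root of~$\Tree$.

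Correctness will follow by essentially the same induction as in Theorem~\ref{thm:varc+varw}: in both the $(\Rightarrow)$ and $(\Leftarrow)$ directions, every occurrence of the vector split $\myvec{a}=(\myvec{a}-\myvec{a'})+\myvec{a'}$ is replaced by the scalar split $b=(b-b')+b'$, while the accounting for~$\myvec{b}$ and for the \sprop{} factor $w(\cdot,\cdot)$ is unchanged. For the running time, there are $\Oh(B)$ budget values and, by the same vector-counting argument that bounds each $b_\ell$ by~$|X|$, there are $\Oh(|X|^{\var_w-1})$ options for~$\myvec{b}$, giving $\Oh(B\cdot|X|^{\var_w-1})$ states per (vertex, position) pair; each transition maximizes over $\Oh(B\cdot|X|^{\var_w-1})$ splits and evaluates $w(\cdot,\cdot)$ in $\Oh(\var_w)$ time. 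Aggregated across the tree in the same way as in the closing running-time paragraph of Theorem~\ref{thm:varc+varw}---absorbing the per-vertex traversal cost using $n=\Oh(|X|)$ and $\numP=\Oh(|X|\cdot\var_w)$---this yields the claimed bound $\Oh(B^2\cdot|X|^{2(\var_w-1)}\cdot\var_w)$. The main bookkeeping obstacle is simply checking that replacing the cost-count dimension by a scalar budget produces a $B^2$ factor in place of the former $|X|^{2(\var_c-1)}$ factor, without introducing any additional tree-traversal overhead.
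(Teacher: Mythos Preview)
Your proposal is correct and follows essentially the same approach as the paper: replace the $(\var_c-1)$-dimensional cost-count vector by a single budget coordinate in $[B]_0$, keep the $\myvec{b}$-dimension and the recurrences of Theorem~\ref{thm:varc+varw} otherwise unchanged, and derive the running time by the identical counting argument. The only cosmetic difference is that the paper defines the budget coordinate via ``$\Costs(S)\le k$'' (and initializes a leaf with $c_t\le k$), whereas you use ``total cost exactly~$b$'' and then scan $b\le B$ at the root; both variants are standard and yield the same bound.
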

\begin{proof}
	We describe a dynamic programming algorithm with two tables~$F$ and~$G$ that have a dimension for all the~$\var_w$ different \sprobs, except for~$\var_w-1$.
	For a vertex~$v\in V$, a given vector~\myvec{b} and $k\in [B]_0$, we define~$\mathcal S^{(v)}_{k,\myvec{b}}$ to be the family of sets of projects~$S$ such that
	\begin{enumerate}
		\item $S$ contains exactly one project of~$P_i$ for each~$x_i\in \off(v)$,
		\item $\Costs(S)\le k$, and
		\item $S$ contains exactly~$b_\ell$ projects of \sprob~$w_\ell$ for each~$\ell \in [\var_w-1]$.
	\end{enumerate}
	For a vertex~$v\in V$ with children~$u_1,\dots,u_t$, a given vector~\myvec{b} and integers~$k\in [B]_0$ and~$i\in[t]$ we define~$\mathcal S^{(v,i)}_{k,\myvec{b}}$ analogously, just that exactly one project of $P_i$ is chosen for each~$x_i\in \off(u_1)\cup\dots\cup\off(u_i)$.

	It follows that we can compute how many projects with \sprob~$w_{\var_w}$ a set~$S\in \mathcal S^{(v)}_{\myvec{a},\myvec{b}}$ contains. That are~$b_{\var_w}^{(v)} := |\off(v)|-\sum_{j=1}^{\var_w-1} b_j$ projects with \sprob~$w_{\var_w}$.
	The entries~$F[v,k,\myvec{b}]$ and~$G[v,i,k,\myvec{b}]$ store the expected phylogenetic diversity of the tree $T_v$ for~$S\in \mathcal S^{(v)}_{k,\myvec{b}}$ and $T_{v,i}$ for~$S\in\mathcal S^{(v,i)}_{k,\myvec{b}}$, respectively.
	We further define the total \sprob to be~$w(b_{\var_w},\myvec{b}) := 1-
	(1-w_{\var_w})^{b_{\var_w}}
	\cdot
	\prod_{i=1}^{\var_w-1}
	(1-w_i)^{b_i}$,
	when~$b_{\var_w}$ and~$\myvec{b}$ describe the number of chosen single~\sprobs.

	Fix a taxon~$x_i$ with project list~$P_i$.
	We store~$F[x_i,k,\myvec{b}] = 0$, if~$P_i$ contains a project~$p = (c_t,w_\ell)$ such that $c_t\le k$, and
	\begin{enumerate}
		\item ($\ell<\var_w$ and~$\myvec{b}={\myvec{0}}_{(\ell)+1}$ or~$\ell=\var_c$ and~$\myvec{b}={\myvec{0}}$).
	\end{enumerate}
	Otherwise, store~$F[x_i,k,\myvec{b}] = -\infty$.

	Let~$v$ be an internal vertex with children~$u_1,\dots,u_t$, we define
	\begin{eqnarray}
		\label{eqn:B+varw-G-1}
		G[v,1,k,\myvec{b}]
		&=&
		F[u_1,k,\myvec{b}]
		+ \lambda(v,u_1) \cdot
		w\left(b_{\var_w}^{(u_{1})},\myvec{b}\right)
	\end{eqnarray}
	and to compute further values of~$G$, we can use the recurrence
	\begin{eqnarray}
		\label{eqn:B+varw-G-i+1}
		G[v,i+1,k,\myvec{b}]
		&=&
		\max_{
			\begin{array}{l}
				k' \in [k]_0\\
				{\myvec{0}}\le \myvec{b'}\le \myvec{b}
			\end{array}}
		\left\{
		\begin{array}{l}
			G[v,i,k-k',\myvec{b}-\myvec{b'}]
			+ F[u_{i+1},k',\myvec{b'}]\\
			+ \lambda(v,u_{i+1}) \cdot
			w\left(b_{\var_w}^{(u_{i+1})},\myvec{b'}\right)
		\end{array}
		\right.
		.
	\end{eqnarray}
	Herein, we write~$\myvec{p}\le \myvec{q}$ if~$\myvec{p}$ and~$\myvec{q}$ have the same dimension~$d$ and~$p_i\le q_i$ for every~$i\in[d]$.
	And finally, we define~$F[v,k,\myvec{b}] = G[v,t,k,\myvec{b}]$.
	
	Return yes if there are~$k\in [B]_0$ and~$\myvec{b}$ such that~$\sum_{i=1}^{\var_w-1} b_i \le |X|$, and~$F[r,k,\myvec{b}]\ge D$ for the root~$r$ of~$\Tree$. Otherwise, return no.
	
	The correctness and the running time can be proven analogously to the proof of Theorem~\ref{thm:varc+varw}.
\end{proof}

Generally, we have to assume that~$B$ is exponential in the input size. However, there are special cases of~\GNAP, in which this is not the case, such as the case with unit costs for projects.
Since~$\var_w\le 2^{\wcode}$, we conclude the following from Theorem~\ref{thm:B+varw}.
\begin{corollary}
	\label{cor:C=1-XP-varw}
	\GNAP is \XP with respect to~$\var_w$ and~$\wcode$, if~$B$ is bounded polynomially in the input size.
\end{corollary}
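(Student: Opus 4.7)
The plan is to apply Theorem~\ref{thm:B+varw} directly under the assumption that~$B$ is bounded polynomially in the input size. Theorem~\ref{thm:B+varw} gives a running time of $\Oh\left(B^2\cdot |X|^{2(\var_w-1)}\cdot \var_w\right)$ for \GNAP. Under the polynomial bound on~$B$, the factor~$B^2$ is absorbed into the polynomial part~$|\mathcal{I}|^{\Oh(1)}$ of the running time, leaving only~$|X|^{2(\var_w-1)}\cdot \var_w$ as the parameter-dependent factor. Since~$|X|\le |\mathcal{I}|$, this gives a running time of~$|\mathcal{I}|^{g(\var_w)}$ for a suitable function~$g$, establishing membership in~\XP with respect to~$\var_w$.

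For the parameter~$\wcode$, the key observation is that the number of distinct \sprops is bounded by the number of distinct binary strings of length at most~$\wcode$, so~$\var_w\le 2^{\wcode}$. Substituting this bound into the running time of Theorem~\ref{thm:B+varw} yields
\[
  \Oh\left(B^2\cdot |X|^{2(2^{\wcode}-1)}\cdot 2^{\wcode}\right),
\]
which under the polynomial bound on~$B$ is of the form~$|\mathcal{I}|^{h(\wcode)}$ for a suitable function~$h$. This establishes \XP membership with respect to~$\wcode$ as well.

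There is no real obstacle here: both statements follow immediately from Theorem~\ref{thm:B+varw} together with the elementary bound~$\var_w\le 2^{\wcode}$ and the polynomial bound on~$B$. The only thing worth emphasizing in the write-up is that the polynomial bound on~$B$ is exactly what allows~$B^2$ to be treated as part of the polynomial factor of an \XP running time rather than contributing to the exponent.
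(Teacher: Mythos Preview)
Your proposal is correct and follows exactly the paper's approach: the paper derives the corollary directly from Theorem~\ref{thm:B+varw} together with the bound~$\var_w\le 2^{\wcode}$, noting that when~$B$ is polynomially bounded the factor~$B^2$ becomes part of the polynomial overhead. Your write-up is simply a more detailed version of the same two-line argument.
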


\subsection{Generalized Noah's Ark Problem on Stars}
We now consider~\SGNAP, the special case of \GNAP where the phylogenetic~\mbox{$X$-tree}~$\Tree$ has height~1.
We first show that \SGNAP is \W1-hard with respect to the number~$|X|$ of taxa. This implies that Proposition~\ref{pps:BruteForce}\ref{pps:GNAP-X-XP} cannot be improved to an FPT algorithm.   
Afterward, we prove that most of the \FPT and \XP algorithms we presented for \MCKP can be adopted for~\SGNAP, yielding algorithms with a faster running time than for \GNAP.

\subsubsection{Hardness}
\begin{theorem}
	\label{thm:X-W1hard}
	\SGNAP is \W1-hard with respect to~$|X|$, even if the given~$X$-tree~$\Tree$ is ultrametric and~$\val_\lambda=D=1$.
\end{theorem}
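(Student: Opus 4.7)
My plan is to reduce from \MCKP parameterized by the number~$m$ of classes, which is \W{1}-hard by Theorem~\ref{thm:MCKP-m-W1} even in the restricted setting where $B=D$ and $c(a)=d(a)$ for every $a\in N$ (multiple-choice subset sum). The key observation driving the reduction is that in a star-shaped $X$-tree with unit edge weights, the phylogenetic diversity of a project selection $S$ collapses to $\sum_{i=1}^{|X|} w_{i,j_i}$, which mirrors the total-value objective of \MCKP. Matching this against $D'=1$ will be equivalent to the \MCKP threshold condition.

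Given such an instance $\mathcal{I}=(N,\{N_1,\dots,N_m\},c,d,B,D)$, I will first discard any item with $c(a)>B$, so that we may assume $c(a)\le B$ for every $a\in N$. I then construct a \GNAP instance $\mathcal{I}'=(\Tree,\mathcal{P},B',D')$ as follows. Let $\Tree$ be the star with root $r$ and leaves $X=\{x_1,\dots,x_m\}$, and set $\lambda((r,x_i))=1$ for every $i\in[m]$. For each class $N_i$, after removing duplicate items, I take
\[
  P_i := \bigl\{\,(c(a),\,d(a)/D) \;:\; a\in N_i\,\bigr\},
\]
sorted by increasing cost, and set $B':=B$ and $D':=1$. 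Because $c=d$, sorting by cost simultaneously sorts by survival probability, so the strict monotonicity required of project lists is automatic. The tree is a star, hence ultrametric with $\height_{\Tree}=\val_\lambda=1$; since $|X|=m$ and $\Delta\le m+1$, this yields an FPT-reduction with respect to $|X|+\Delta$.

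The correctness argument I have in mind proceeds by computing the diversity directly: since in a star $\off(x_i)=\{x_i\}$, any selection $S=\{p_{1,j_1},\dots,p_{m,j_m}\}$ induced by items $a_{j_i}\in N_i$ satisfies
\[
  PD_{\Tree}(S) \;=\; \sum_{i=1}^{m} \lambda((r,x_i))\cdot w_{i,j_i} \;=\; \frac{1}{D}\sum_{i=1}^{m} d(a_{j_i}).
\]
Hence the diversity threshold $PD_\Tree(S)\ge 1$ is equivalent to $\sum_i d(a_{j_i})\ge D$, and the budget constraint $\Costs(S)\le B'$ is literally the \MCKP budget constraint $\sum_i c(a_{j_i})\le B$. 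Together this gives a bijection between solutions of $\mathcal{I}$ and $\mathcal{I}'$.

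The main obstacle is ensuring that all constructed survival probabilities lie in $[0,1]\cap\mathbb{Q}$ while simultaneously pinning $D'=1$ and $\val_\lambda=1$. This is precisely why I reduce from the $c=d$ special case: after pruning items with $c(a)>B$, we automatically have $d(a)=c(a)\le B=D$, so the scaling $w:=d/D$ is valid without any further tricks such as capping values or introducing dummy projects. Any attempt to reduce from general \MCKP would require such a capping step, and verifying that capping preserves the yes/no answer would be the only extra subtlety.
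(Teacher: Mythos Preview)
Your proposal is correct and follows essentially the same reduction as the paper: a star with $m$ leaves and unit edge weights, project lists obtained by scaling values by $1/D$, and $B'=B$, $D'=1$. The only (minor) difference is that you explicitly start from the $c=d$, $B=D$ special case of \MCKP to guarantee that the scaled survival probabilities land in $[0,1]$ and that the project lists are strictly monotone, whereas the paper reduces from general \MCKP and simply asserts that one may assume $0\le d(a)\le D$; both routes are valid and yield the same construction.
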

\begin{proof}
	\pfpara{Reduction}
	We reduce from \MCKP, which by Theorem~\ref{thm:MCKP-m-W1} is \W1-hard with respect to the number of classes~$m$.
	Let~$\mathcal I=(N,\{N_1,\dots,N_m\},c,d,B,D)$ be an instance of \MCKP.
	We define an instance~\mbox{$\mathcal I'=(\Tree,\mathcal P,B':=B,D':=1)$} in which the~$X$-tree~$\Tree=(V,E,\lambda)$ is a star with center~$v$ and the vertex set is~$V:=\{v\} \cup X$, with~$X:=\{x_1,\dots,x_m\}$. Set~$\lambda(e):=1$ for every~$e\in E$.
	For every class~$N_i=\{ a_{i,1}, \dots, a_{i,\ell_i} \}$, define a project list~$P_i$ with projects~$p_{i,j} := (c_{i,j} := c(a_{i,j}), w_{i,j} := d(a_{i,j})/D )$. The~$|X|$-collection of project lists~$\mathcal P$ contains all these project lists~$P_i$.
	
	\pfpara{Correctness}
	Because we may assume that~$0\le d(a)\le D$ for all~$a\in N$, the \sprobs fulfill~$w_{i,j}\in [0,1]$ for all~$i\in [m]$ and $j\in [|N_i|]$.
	The tree has $m$ taxa and the reduction is clearly computable in polynomial time, so to prove that it is a parameterized reduction, it only remains to show the equivalence.
	
	($\Rightarrow$)
	Let~$S$ be a solution for~\Instance with~$S\cap N_i=\{a_{i,j_i}\}$.
	We show that~$S'=\{p_{i,j_i} \mid i\in [m]\}$ is a solution for~$\mathcal I'$:
	The cost of the set~$S'$ is~$\sum_{i=1}^m c_{i,j_i} = \sum_{i=1}^m c(a_{i,j_i}) \le B$ and further~$PD_\Tree(S') = \sum_{(v,x_i)\in E} \lambda(v,x_i) \cdot w_{i,j_i} = \sum_{(v,x_i)\in E} 1 \cdot d(a_{i,j})/D = \frac 1D \cdot \sum_{i=1}^m d(a_{i,j}) \ge 1 = D'$.
	
	($\Leftarrow$)
	Let~$S=\{p_{1,i_1},\dots,p_{m,j_{m}}\}$ be a solution for~$\Instance'$.
	We show that~$S'=\{a_{1,i_1},\dots,a_{m,j_{m}}\}$\linebreak is a solution for \Instance:
	Clearly,~$S'$ contains exactly one item per class.
	The cost\linebreak of the set~$S'$ is~$c_\Sigma(S') = \sum_{i=1}^m c(a_{i,j_i}) = \sum_{i=1}^m c_{i,j_i} \le B$.
	The diversity of~$S'$\linebreak is~\mbox{$d_\Sigma(S') = \sum_{i=1}^m d(a_{i,j_i}) = \sum_{i=1}^m w_{i,j_i} \cdot D = PD_\Tree(S)\cdot D \ge D$}.
\end{proof}

The~$X$-tree constructed in the reduction above is a star and therefore has unbounded maximum degree~$\Delta$.
In the following, we show that~\GNAP also is \W{1}-hard with respect to~$|X|$ when~$\Tree$ is a binary tree.
\begin{corollary}
	\label{cor:X+Delta}
	\GNAP is \W1-hard with respect to~$|X|+D$ even if~$\Delta=3$ and~$\val_\lambda=1$.
\end{corollary}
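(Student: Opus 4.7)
The plan is to adapt the reduction of Theorem~\ref{thm:X-W1hard} from \MCKP{} by replacing the star by a tree of maximum degree three, while ensuring that every internal edge contributes a fixed constant to the phylogenetic diversity. Simply subdividing the star would destroy the linear form of the PD formula, because an internal edge whose child is~$v$ contributes $\lambda(e)\cdot(1-\prod_{x_i\in \off(v)}(1-w_{i,j_i}))$, a quantity that depends on the project choices. The key trick is to attach, as a sibling of every real taxon, a \emph{dummy taxon whose unique project has survival probability~$1$}; then every internal subtree contains an offspring that survives with certainty, the product vanishes, and every internal edge contributes exactly $\lambda(e)=1$.

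Given an \MCKP{} instance $(N,\{N_1,\dots,N_m\},c,d,B,D)$ with $d(a)\le D$ for every $a\in N$ (WLOG, by capping values above~$D$), I would construct the following tree~$\Tree$. Take a caterpillar spine $v_1,\dots,v_{m-1}$ rooted at~$v_1$; for $i<m-1$ let $v_i$ have children $v_{i+1}$ and~$u_i$; let $v_{m-1}$ have children $u_{m-1}$ and~$u_m$; and for each $i\in[m]$ let $u_i$ have two leaf children $x_i$ and~$y_i$. All edges get weight~$1$, so $\val_\lambda=1$, and every non-leaf vertex other than~$v_1$ has degree exactly three, so $\Delta=3$. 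For each~$x_i$, use the project list corresponding to~$N_i$ with projects $(c(a_{i,j}),d(a_{i,j})/D)$; for each dummy~$y_i$, use the single project $(0,1)$. Set $B':=B$ and $D':=3m-1$.

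For the correctness argument, count $2m-2$ internal edges, $m$ dummy-leaf edges, and $m$ taxon-leaf edges. By the observation above, each internal edge and each dummy-leaf edge contributes~$1$ to $PD_\Tree(S)$ for every selection~$S$. The only variable contribution comes from the $m$ taxon-leaf edges and equals $\sum_{i=1}^m w_{x_i,j_i}=\tfrac1D\sum_{i=1}^m d(a_{i,j_i})$. Hence $PD_\Tree(S)\ge D'=3m-1$ iff $\sum_i d(a_{i,j_i})\ge D$, while $\Costs(S)\le B'=B$ iff $\sum_i c(a_{i,j_i})\le B$, which are exactly the \MCKP{} conditions. Since $|X|=2m$, $D'=3m-1$, and $\height_\Tree=m$ are all linear in~$m$, and \MCKP{} is \W1-hard in~$m$ by Theorem~\ref{thm:MCKP-m-W1}, this yields \W1-hardness of~\GNAP{} with respect to $|X|+D+\height_\Tree$ under $\Delta=3$ and $\val_\lambda=1$.

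The main obstacle is that $\lambda\colon E\to\mathbb N$ together with $\val_\lambda=1$ forces every edge weight to equal~$1$, so one cannot ``switch off'' internal edges by setting $\lambda(e)=0$ in order to directly mimic the star. The dummy gadget resolves this by making every internal edge's contribution a fixed constant, so that the combinatorial information of the \MCKP{} instance sits entirely on the taxon-leaf edges. Using a balanced binary arrangement of the $u_i$s instead of a caterpillar would lower $\height_\Tree$ to $O(\log m)$, but since $\height_\Tree$ enters the parameter only additively, the linear height of the caterpillar is already sufficient.
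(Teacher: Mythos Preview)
Your reduction is correct and rests on the same key trick as the paper: attach dummy taxa with survival probability~$1$ so that every internal edge has an offspring that survives with certainty, forcing its contribution to be exactly~$1$. The paper, however, implements this more frugally. It reduces from the star instance of Theorem~\ref{thm:X-W1hard} rather than directly from \MCKP, builds a bare caterpillar $v_1,\dots,v_{|X|}$ with $x_i$ hanging off $v_i$, and adds a \emph{single} dummy $x^*$ as a second child of the deepest spine vertex~$v_{|X|}$. Because $x^*$ lies in the offspring of every $v_i$, one dummy suffices to saturate all internal edges; your construction spends $m$ dummies, one per real taxon, and therefore also needs the extra layer of $u_i$-vertices to keep $\Delta=3$. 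A minor further difference: the paper gives $x^*$ two projects $(0,0)$ and $(1,1)$, bumps the budget by~$1$, and then argues that an optimal solution always buys the survival-$1$ project; your single forced project $(0,1)$ sidesteps that argument entirely. Both routes yield the same parameter bounds (everything linear in~$m$), so neither has a real advantage beyond elegance.
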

\begin{proof}
	\pfpara{Reduction}
	We reduce from~\SGNAP, which by Theorem~\ref{thm:X-W1hard} is \W{1}-hard with respect to~$|X|$, even if~$\val_\lambda=D=1$.
	Let~$\Instance=(\Tree,\mathcal P,B,D)$ be an instance of~\GNAP with $D=1$.
	Define an~$X$-tree~$\Tree':=(V,E)$ as follows:
	Let~$V:=X \cup\{v_1,\dots,v_{|X|},x^*\}$ and~$E:= \{(v_i,x_i),(v_i,v_{i+1}) \mid i\in [|X|-1]\} \cup \{ (v_n,x_n),(v_n,x^*) \}$, and set~$\lambda(e):=1$ for each edge~$e$ of~$\Tree'$.
	Define a project-list~$P_{x^*} = (p_{*,0} := (0,0),p_{*,1} := (1,1))$ for~$x^*$.
	Finally, let~$\Instance'$ be~$(\Tree',\mathcal P\cup P_{x^*},B':=B+1,D':=|X|+1)$.
	
	\pfpara{Correctness}
	The reduction can be computed in polynomial time and the size of the taxa set has increased by only one. Note that we can assume that a solution of~$\Instance'$ contains~$p_{*,1}$ because otherwise exchanging an arbitrary project with~$p_{*,1}$ gives a better solution. It thus remains only to show that~$S$ is a solution for~\Instance if and only if~$S':=S\cup\{ p_{*,1} \}$ is a solution for~$\Instance'$.
	Clearly,~$\Costs(S') = \sum_{p_{i,j} \in S'} c_{i,j} = c_{*,1} + \sum_{p_{i,j} \in S} c_{i,j} = \Costs(S)+1$.
	Because~$w_{*,1}=1$, we conclude that the \sprob at each vertex~$v_i$ is exactly~1. Thus, the value of~$PD_{\Tree'}(S')$ is
	$$
	\sum_{i=1}^{|X|-1} \lambda(v_i,v_{i+1}) \cdot 1
	+
	\lambda(v_{|X|},x^*) \cdot 1
	+
	\sum_{p_{i,j}\in S} \lambda(v_i,x_i) \cdot w_{i,j} =
	|X| + PD_{\Tree}(S).
	$$
	Hence,~$\Costs(S') \le B+1$ and~$PD_{\Tree'}(S')\ge |X|+1$ if and only if~$\Costs(S) \le B$\linebreak and~$PD_{\Tree}(S)\ge 1$.
\end{proof}

%\kommentar{\subsubsection{Algorithmic results}}
In Section~\ref{sec:MCKP}, we presented a set of algorithms for~\MCKP. We will now make use of these algorithms to obtain several tractability results for \SGNAP. 

\begin{proposition}
	\label{pps:height=1->mckp}
	\SGNAP can be solved in time
	\begin{itemize}
		\item $\Oh(D\cdot 2^{\wcode} \cdot \numP + |\mathcal{I}|)$, 
		\item $\Oh(B \cdot \numP + |\mathcal{I}|)$, 
		\item $\Oh(C \cdot \numP \cdot |X| + |\mathcal{I}|)$, and
		\item $\Oh(|X|^{\var_c-1} \cdot \numP + |\mathcal{I}|)$.
	\end{itemize}
\end{proposition}
\begin{proof}
	To show the statement, we describe how to reduce~\SGNAP  to~\MCKP and then use algorithms presented in Section~\ref{subsec:algos-MCKP}.
	
	\pfpara{Reduction}
	Let~$\mathcal{I}=(\Tree,\lambda,\mathcal{P},B,D)$ be an instance of~\SGNAP.
	We define an instance~$\mathcal{I}'=(N,\{N_1,\dots,N_{|X|}\},c,d,B',D')$ of~\MCKP.
	Without loss of generality, each \sprob is in the form~$w_i=w_i'/2^{\wcode}$ with~$w_i'\in \mathbb{N}_0$, and~$w_i'\le 2^{\wcode}$.
	For every taxon~$x_i$ with project list~$P_i$, define the class~$N_i$ and for every project~$p_{i,j}=(c_{i,j},w_{i,j})\in P_i$ add an item~$a_{i,j}$ to~$N_i$ with cost~$c(a_{i,j}):=c_{i,j}$ and value~$d(a_{i,j}):=w_{i,j}'\cdot \lambda(w,x_i)$. Set~$B':=B$ and~$D':=D\cdot 2^{\wcode}$.
	
	\pfpara{Correctness}
	Let~$S=\{p_{1,j_1},\dots,p_{|X|,j_{|X|}}\}$ be a solution for the instance~\Instance of~\GNAP.
	Define the set~$S'=\{a_{1,j_1},\dots,a_{|X|,j_{|X|}}\}$. Clearly, \mbox{$c_\Sigma(S')=\Costs(S)\le B$}.
	Further,
	\begin{eqnarray*}
		\sum_{i=1}^{|X|} d(a_{i,j_i}) &=& \sum_{i=1}^{|X|} w_{i,j_i}'\cdot \lambda(w,x_i)\\
		&=& 2^{\wcode}\cdot \sum_{i=1}^{|X|} w_{i,j_i}\cdot \lambda(w,x_i)\\
		&=& 2^{\wcode}\cdot PD_\Tree(S) \ge 2^{\wcode}\cdot D = D'.
	\end{eqnarray*}
	Thus,~$S'$ is a solution for~$\Instance'$.
	
	Analogously, one can show that if~$S'$ is a solution for the instance~$\Instance'$ of~\MCKP, then~$S$ is a solution for the~\GNAP-instance~\Instance.
	
	\pfpara{Running time}
	The reduction can be computed in~$\Oh(|\Instance|)$ time.
	We observe that in~$\Instance'$ the size of the set~$N$ equals the number~$\numP$ of projects, the number of classes~$m$ is the number~$|X|$ of taxa, and the budget~$B$ remains the same. Because all costs are simply copied, the maximal cost~$C$ and the number of different costs~$\var_c$ remain the same.
	Because the \sprobs are multiplied with an edge weight, it follows that~$\var_d \in \Oh(\var_w \cdot \val_\lambda)$. By definition,~$D' = D\cdot 2^{\wcode}$.
	
	Thus, with this reduction at hand, we can obtain any of the claimed running times by using the \MCKP running time bound given in Table~\ref{tab:results-mckp} for the corresponding parameter.
\end{proof}
To obtain an FPT algorithm for $\var_c+\var_d$ we cannot directly make use of the corresponding FPT result for \MCKP because $\var_d$ may be as large as $\var_w\cdot \val_\lambda$. Instead, we present a direct reduction from \SGNAP to~\ILPF. 
\begin{theorem}
	\label{thm:height=1:varc+varw}
	There is a reduction from~\SGNAP  to~\ILPF with~$\Oh(2^{\var_c+\var_d}\cdot \var_c)$ variables.
	Thus, \SGNAP is \FPT with respect to~$\var_c+\var_w$.
\end{theorem}
\begin{proof}
	\pfpara{Reduction}
	Let~$\mathcal{I}=(\Tree,\lambda,\mathcal{P},B,D)$ be an instance of~\SGNAP and let~$\rho$ denote the root of~$\Tree$.
	
	We may assume that a project list~$P_i$ does not contain two projects of the same cost or the same value.
	In the following, we call~$T=(C,W)$ a \textit{type}, for sets~$C\subseteq \mathcal C$ and~$W\subseteq \mathcal W$ with~$|C| = |W|$, where~$\mathcal C$ and~$\mathcal W$ are the sets of different costs and~\sprobs, respectively. Let~$\mathcal X$ be the family of all types.
	We say that the \textit{project list~$P_i$ is of type~$T=(C,W)$} if~$C$ and~$W$ are the set of costs and \sprobs of~$P_i$.
	For each~$T\in\mathcal X$, we define~$m_T$ be the number of classes of type~$T$.
	
	Observe, for each type~$T=(C,W)$, project list~$P_i$ of type~$T$, and a project~$p\in P_i$, we can compute the \sprob of~$p$ when we know the cost~$c_k$ of~$p$. More precisely, if~$c_k$ is the~$\ell$th cheapest cost in~$C$, then the \sprob of~$p$ is the~$\ell$th smallest \sprob in~$W$.
	For a type~$T=(C,W)$ and~$i\in [\var_c]$, we define the constant~$w_{T,i}$ to be~$-n\cdot \val_\lambda$ if~$c_i \not\in C$. Otherwise, let~$w_{T,i}\in [0,1]$ be the~$\ell$th smallest \sprob in~$W$, if~$c_i$ is the~$\ell$th smallest cost in~$C$.
	
	For two taxa~$x_i$ and~$x_j$ with project lists~$P_i$ and~$P_j$ of the same type~$T$, it is possible that~$\lambda(v,x_i)\ne\lambda(v,x_j)$.
	Hence, it can make a difference if a project is selected for the taxon~$x_i$ instead of~$x_j$.
	For a type~$T$, let~$x_{T,1},\dots,x_{T,m_T}$ be the taxa, such that the project lists~$P_{T,1},\dots,P_{T,m_T}$ are of type~$T$ and let~$\lambda(v,x_{T,i})\ge\lambda(v,x_{T,i+1})$ for each~$i\in [m_T -1]$.
	For each type $T$, we define a function~$f_T: [m_T]_0 \to \mathbb{N}$ by~$f_T(0):=0$ and~$f_T(\ell)$ stores total value of the first~$\ell$ edges. More precisely, that is~$f_T(\ell) := \sum_{i=1}^{\ell} \lambda(v,x_i)$.
	
	The following describes an instance of~\ILPF.
	\begin{align}
		\label{eqn:ILPF-B}
		\sum_{T \in \mathcal{X}} \sum_{i=1}^{\var_c} y_{T,i} \cdot c_i \le & \; B\\
		\label{eqn:ILPF-D}
		\sum_{T \in \mathcal{X}} \sum_{i=1}^{\var_c} w_{T,i} \cdot g_{T,i} \ge & \; D\\
		\label{eqn:ILPF-g}
		f_{T}\left(\sum_{\ell=i}^{\var_c} y_{T,\ell}\right) - f_{T}\left(\sum_{\ell=i+1}^{\var_c} y_{T,\ell}\right) = & \; g_{T,i} & \quad \forall T\in\mathcal{X}, i\in [\var_c]\\
		\label{eqn:ILPF-Nr}
		\sum_{i=1}^{\var_c} y_{T,i} = & \; m_{T} & \quad \forall T\in \mathcal{X}\\
		\label{eqn:ILPF-variables}
		y_{T,i}, g_{T,i} \ge & \; 0 & \quad \forall T\in\mathcal{X}, i\in [\var_c]
	\end{align}
	The variable~$y_{T,i}$ expresses the number of projects with cost~$c_i$ that are chosen in a project list of type~$T$.
	We want to assign the most valuable edges that are incident with taxa that have a project list of type~$T$ to the taxa in which the highest \sprob is chosen. To receive an overview, in~$g_{T,\var_c}$ we store the total value of the~$y_{T,\var_c}$ most valuable edges that are incident with a taxon that has a project list of type~$T$. Then, in~$g_{T,\var_c-1}$ we store the total value of the next~$y_{T,\var_c-1}$ most valuable edges, and so on.
	
	For each type~$T$, the function~$f_T$ is not necessarily linear. However, for each~$f_T$ there are affine linear functions~$p_T^{(1)},\dots,p_T^{(m_T)}$ such that~$f_T(i) = \min_\ell p_T^{(\ell)}(i)$ for each~$i\in [m_T]$~\cite{etscheid}.
	
	\pfpara{Correctness}
	Observe that if~$c_i\not\in C$, then because we defined~$d_{T,i}$ to be~$-n\cdot\val_\lambda$, Inequality~(\ref{eqn:ILPF-D}) would not be fulfilled if~$g_{T,i} > g_{T,i+1}$ and consequently $y_{T,i}=0$ if~$c_i\not\in C$ for each type~$T=(C,W)\in\mathcal X$ and~$i\in[\var_c]$.
	Inequality~(\ref{eqn:ILPF-B}) ensures that the total cost is at most~$B$.
	Inequality~(\ref{eqn:ILPF-D}) ensures that the total phylogenetic diversity is at least~$D$.
	The variable~$g_{T,i}$ stores  the total weight of the edges towards the~$y_{T,i}$ taxa with projects of project lists of type~$T$, in which a project of cost~$c_i$ is selected. All these projects have \sprob~$w_{T,i}$, and thus the phylogenetic diversity of these projects is~$w_{T,i} \cdot (g_{T,i} - g_{T,i+1})$.   
	Equation~(\ref{eqn:ILPF-g}) ensures that the value of~$g_{T,i}$ is chosen correctly.
	Equation~(\ref{eqn:ILPF-Nr}) ensures that  exactly~$m_T$ projects are picked from the project lists of type~$T$, for each~$T\in \mathcal X$. Altogether, this shows the correctness of the reduction.
	Finally, the constructed instance has~$\Oh(2^{\var_c+\var_d}\cdot \var_c)$ variables because~$\mathcal X\subseteq 2^{\mathcal C} \times 2^{\mathcal W}$ which implies that there are~$\Oh(2^{\var_c+\var_d}\cdot \var_c)$ different options for the variables~$y_{T,i}$ and~$g_{T,i}$.
\end{proof}

\section{Restriction to Two Projects per Taxon}\label{sec:two-projects}
We finally study two special cases of~\NAP[a_i]{c_i}{b_i}{2}---the special case of~\GNAP where every project list contains exactly two projects.

\subsection{Sure Survival or Extinction for each Project}
First, we consider~\NAP[0]{c_i}{1}{2}, the special case where each taxon~$x_i$  survives definitely if cost~$c_i$ is paid and  becomes extinct, otherwise.
This special case was introduced by Pardi and Goldman~\cite{pardi} who also presented a pseudopolynomial-time algorithm that computes a solution in~$\Oh(B^2 \cdot n)$~time.
Because we may assume that~$B\le C\cdot \bet X$, we conclude the following.
\begin{corollary}
	\label{cor:C-01-NAP}
	\NAP[0]{c_i}{1}{2} can be solved in~$\Oh(C^2 \cdot {\bet X}^3)$ time.
\end{corollary}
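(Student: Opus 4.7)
The plan is to obtain the bound purely by substitution into the Pardi--Goldman algorithm, with no new algorithmic ideas required. Recall from the paragraph introducing the parameter $C$ that we may assume $B \le C \cdot |X|$: otherwise, one can check in polynomial time whether the diversity obtained by selecting the more expensive project for every taxon already reaches $D$, and answer accordingly. So, after this preprocessing, the budget is bounded by $C \cdot |X|$.

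Next, I would observe that for an $X$-tree the total number $n$ of vertices is $\Oh(|X|)$. Although the paper's trees are not required to be binary, we can without loss of generality contract or ignore internal vertices of degree two and discard subtrees containing no leaves, so the number of internal vertices is at most $|X| - 1$ and $n = \Oh(|X|)$. This is a standard reduction and only changes the edge weights by summation along paths, which does not affect phylogenetic diversity computations.

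With these two observations, I would simply invoke the algorithm of Pardi and Goldman~\cite{pardi}, which solves~\NAP[0]{c_i}{1}{2} in~$\Oh(B^2 \cdot n)$ time. Substituting the bounds $B \le C \cdot |X|$ and $n = \Oh(|X|)$ gives a running time of
\[
\Oh(B^2 \cdot n) \;=\; \Oh\bigl((C \cdot |X|)^2 \cdot |X|\bigr) \;=\; \Oh(C^2 \cdot |X|^3),
\]
which is exactly the bound claimed in the corollary. There is no real obstacle here; the only thing to be careful about is that the preprocessing step that ensures $B \le C \cdot |X|$ is itself computable in polynomial time and preserves the answer, but that is already justified earlier in the paper where the parameter $C$ is introduced.
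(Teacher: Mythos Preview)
Your proposal is correct and follows essentially the same approach as the paper: the paper's entire justification for this corollary is the single sentence ``Because we may assume that~$B\le C\cdot \bet X$, we conclude the following,'' relying on the Pardi--Goldman $\Oh(B^2\cdot n)$ algorithm and the implicit fact that $n=\Oh(|X|)$ (which the paper invokes explicitly elsewhere, e.g.\ in the running-time analysis of Theorem~\ref{thm:varc+varw}). Your write-up simply makes the $n=\Oh(|X|)$ step explicit via the standard degree-two contraction argument, which is fine.
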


We now show that~\NAP[0]{c_i}{1}{2} is \FPT with respect to~$D$, with an adaption of the above-mentioned algorithm of Pardi and Goldman~\cite{pardi} for the parameter~$B$.
\begin{proposition}
	\label{pps:wcode=1-D}
	\NAP[0]{c_i}{1}{2} can be solved in~$\Oh(D^2 \cdot |X|)$ time.
\end{proposition}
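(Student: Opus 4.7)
The plan is to adapt the Pardi--Goldman tree DP by swapping the roles of the budget and the diversity. Since all \sprops are $0$ or $1$, the expected phylogenetic diversity of a selection $S$ equals the total weight of those edges $(u,v)$ for which some leaf in $\off(v)$ is saved; in particular the achievable diversities are integers, and once capped at $D$ there are only $D+1$ relevant values. Indexing the DP by achieved diversity instead of spent budget therefore keeps the table of size $\Oh(D)$ per vertex, which is what allows a running time of $\Oh(D^2\cdot n)$ rather than $\Oh(B^2\cdot n)$.

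Concretely, for every vertex $v$ of the input $X$-tree $\Tree$, I would maintain a table $F_v[d,s]$ with $d\in[D]_0$ and $s\in\{0,1\}$ storing the minimum total cost of a selection of leaves in $\off(v)$ that achieves diversity $\min(d,D)$ on the edges strictly inside $T_v$ and satisfies $s=1$ iff at least one leaf of $\off(v)$ is saved. The survival flag $s$ is necessary because, when $v$ is later merged into its parent, the contribution of the edge from the parent of $v$ to $v$ depends only on whether $v$ survives, not on the full selection below $v$. For a leaf $x_i$ initialize $F_{x_i}[0,0]=0$, $F_{x_i}[0,1]=c_i$, and every other entry to $+\infty$.

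For an internal vertex $v$ with children $u_1,\dots,u_t$, I would compute $F_v$ by absorbing children one at a time through an auxiliary table $G_{v,i}[d,s]$ corresponding to the $i$-partial subtree~$T_{v,i}$. Starting from $G_{v,0}[0,0]=0$ and all other entries $+\infty$, the recurrence is
\[
G_{v,i+1}[d,s]=\min\bigl\{\,G_{v,i}[d',s']+F_{u_{i+1}}[d'',s'']\,:\,\min(d'+d''+s''\lambda((v,u_{i+1})),D)=d,\ s=s'\vee s''\,\bigr\},
\]
and finally $F_v[d,s]:=G_{v,t}[d,s]$. The instance is a yes-instance iff $\min_{s\in\{0,1\}} F_r[D,s]\le B$ for the root $r$ of $\Tree$. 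Correctness of this recurrence is an easy induction on the tree: the right-hand side partitions any selection in $T_{v,i+1}$ into its restrictions to $T_{v,i}$ and to $T_{u_{i+1}}$, and the edge $(v,u_{i+1})$ contributes $\lambda((v,u_{i+1}))$ to the PD iff $u_{i+1}$ survives, i.e.\ iff $s''=1$.

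For the running time, each merge step costs $\Oh(D^2)$ (the two Boolean flags contribute only a constant factor, and for fixed $s$ there are $\Oh(D)$ choices of $d$ and $\Oh(D)$ choices of the split $d=d'+d''+s''\lambda((v,u_{i+1}))$). Since the total number of parent-to-child merges is $n-1$ across the whole tree, the overall running time is $\Oh(D^2\cdot n)$. The main point that requires care is the correct bookkeeping of the survival flag $s$: without it the algorithm cannot decide whether to charge $\lambda((v,u_{i+1}))$ when merging a subtree, so it is essential that $s$ is updated as the Boolean $\vee$ of the two merged flags and carried along with $d$ throughout the recursion.
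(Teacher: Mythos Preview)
Your proposal is correct and follows essentially the same approach as the paper: a tree DP indexed by the integer diversity $d\in[D]_0$, storing minimum cost and merging the children one by one through the partial subtrees~$T_{v,i}$. The only difference is bookkeeping---you carry an explicit survival flag $s\in\{0,1\}$ and charge the edge $(v,u_{i+1})$ at merge time, whereas the paper folds the incoming-edge weight into the leaf base case (setting $F[x_i,d]=c(x_i)$ for $1\le d\le\lambda((\mathrm{parent}(x_i),x_i))$) and dispenses with the flag.
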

\begin{proof}
	\pfpara{Table definition}
	We write~$c(x_i)$ for the cost of the project with \sprob~1 in~$P_i$.
	For a set~$A$ of vertices, a vertex~$v$, and integers~$b$ and~$d$, we call a set of projects~$S$ an~\textit{$(A,v,d)$-respecting set}, if~$PD_{\Tree_v}(S)\ge d$, and~$S$ contains exactly one project of the project lists of the offspring of~$A$, and~$S$ contains at least one project with \sprob~1.
	
	We describe a dynamic programming algorithm with two tables~$F$ and~$G$.
	We want entry~$F[v,d]$ for a vertex~$v\in V$, an integer~$d \in [D]_0$ to store the minimal cost of a~$(\{v\},v,d)$-respecting set.
	If~$v$ is an internal vertex with children~$u_1,\dots,u_t$ and~$i\in [t]$,
	then we want entry~$G[v,i,d]$ to store the minimal cost of a~$(\{u_1,\dots,u_i\},v,d)$-respecting~set.

	\pfpara{Algorithm} In the algorithm description, we let~$\mo$ denote non-negative subtraction, that is, $a \mo b := \max(a - b,0)$.

	As basic cases for a leaf~$x_i$ store~$F[x_i,0] = c(x_i)$ and for each~$d > 0$ store~$F[x_i,d] = \infty$.

        Now, let~$v$ be an internal vertex with children~$u_1,\dots,u_t$. We define~$G[v,1,d] = F[u_1,d \mo \lambda(v u_1)]$.
	If for a fixed~$i\in [t]$ the values of~$G[v,i,d]$ and~$F[u_{i+1},d]$ are known for each~$d\in [D]_0$, we compute the value of~$G[v,i+1,d]$ with the recurrence
	\begin{equation}
		\label{eqn:wcode=1-D}
		G[v,i+1,d]
		 =  \min \{ G[v,i,d]; \min_{d'\in [d_{i+1}]} \{ G[v,i,d_{i+1}-d'] + F[u_{i+1},d'] \} \},
		\nonumber
	\end{equation}
	where~$d_{i+1} := d \mo \lambda(v u_{i+1})$.
	We set~$F[v,d]:=G[v,t,d]$, eventually.
	
	We return yes if~$F[\rho,D] \le B$ for the root~$\rho$ of~$\Tree$ and no, otherwise.
	
	\pfpara{Correctness}
	The basic cases, the computation of~$G[v,1,d]$, and the computation of~$F[v,d]$ for an internal vertex~$v$ and~$d\in [D]_0$ are correct by definition.
	It remains to show that~$G[v,i+1,d]$ stores the correct value if~$G[v,i,d]$ and~$F[u_{i+1},d]$ do.
	We first show that if~$S$ is\linebreak a~$(\{u_1,\dots,u_{i+1}\},v,d)$-respecting set, then~$G[v,i+1,d]\le \Costs(S)$.
	Afterward, we show that if~$G[v,i+1,d]=c$, then there is a~$(\{u_1,\dots,u_{i+1}\},v,d)$-respecting set~$S$ with~$\Costs(S)=c$.
	
	($\Rightarrow$)
	Let~$S$ be a~$(\{u_1,\dots,u_{i+1}\},v,d)$-respecting set.
	Let~$S_1$ be the set of projects that are in one of the project lists of the offspring of~$u_{i+1}$.
	Define~$S_2 := S\setminus S_1$.
	If~$S_2$ only contains projects of \sprob~0, then $G[v,i+1,d] = G[v,i,d]$.
	Otherwise, let~$d' := PD_{\Tree_v}(S_1)$.
	Then,~$PD_{\Tree_v}(S_2) = PD_{\Tree_v}(S) - PD_{\Tree_v}(S_1) \ge d - d'$.
	We conclude that~$S_1$ is a~$(\{u_{i+1}\},v,d')$-respecting set and~$S_2$ is a~$(\{u_1,\dots,u_{i}\},v,d-d')$-respecting set.
	Consequently,
	\begin{eqnarray}
		\label{eqn:wcode=1-LA-1}
		G[v,i+1,d] &\le& G[v,i,d-d'] + F[u_{i+1},d' \mo \lambda(v u_{i+1})]\\
		\label{eqn:wcode=1-LA-2}
		&\le& \Costs(S_G) + \Costs(S_F) = \Costs(S).
	\end{eqnarray}
	Here, Inequality~(\ref{eqn:wcode=1-LA-1}) follows from the recurrence in Recurrence~(\ref{eqn:wcode=1-D}) and Inequality~(\ref{eqn:wcode=1-LA-2}) follows from the induction hypothesis.
	
	($\Leftarrow$)
	Let~$G[v,i+1,d]$ store~$c$.
	Unless~$G[v,i+1,d] = G[v,i,d]$ there is an integer~$d'\in [d_{i+1}]_0$, such that~$G[v,i+1,d] = G[v,_{i+1},d_{i+1}-d'] + F[u_{i+1},d']$.
	By the induction hypothesis, there is a~$(\{u_{i+1}\},v,d')$-respecting set~$S_1$ and a~$(\{u_1,\dots,u_{i}\},v,d-d')$-respecting set~$S_2$ such that~$F[u_{i+1},d']=\Costs(S_1)$ and~$G[v,i,d-d']=\Costs(S_2)$.
	We conclude that~$S:=S_1\cup S_2$ is a~$(\{u_1,\dots,u_{i+1}\},v,d)$-respecting set and~$c = \Costs(S_1)+\Costs(S_2)=\Costs(S)$.

	\pfpara{Running time}
	Table~$F$ has~$\Oh(D \cdot n)$ entries, and each entry can be computed in constant time.
	Also,~$G$ has~$\Oh(D \cdot n)$ entries.
	Entry~$G[v,i+1,d]$ is computed by checking at most~$D+1$ options for~$d'$.
	Altogether, a solution can be found in~$\Oh(D^2 \cdot n)$~time.
\end{proof}

We may further use this pseudopolynomial-time algorithm to obtain an algorithm for the maximum edge weight~$\val_\lambda$:
If we are given an input instance~\Instance of~\NAP[0]{c_i}{1}{2} such that~$\sum_{e\in E} \lambda(e) < D$, then we may directly return no since the desired diversity can never be reached. After this check, we may assume~$D \le \sum_{e\in E} \lambda(e)\le \val_\lambda \cdot (n-1)$. This gives the following running time bound. 
\begin{corollary}
	\label{cor:wcode=1,val-lambda}
	\NAP[0]{c_i}{1}{2} can be solved in~$\Oh((\val_\lambda)^2 \cdot |X|^3)$ time.
\end{corollary}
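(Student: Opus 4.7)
The plan is to derive this corollary as a direct consequence of Proposition~\ref{pps:wcode=1-D}, by first reducing the problem to the case where the required diversity $D$ is bounded in terms of $\val_\lambda$ and $n$. The key observation, which is already stated in the paragraph preceding the corollary, is that the total phylogenetic diversity attainable on $\Tree$ is at most $\sum_{e\in E} \lambda(e)$, independently of which projects are selected, since every factor of the form $\left(1-\prod_{x_i\in\off(v)}(1-w_{i,j_i})\right)$ lies in $[0,1]$.

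First, I would preprocess the input by computing $W := \sum_{e\in E} \lambda(e)$ in $\Oh(n)$ time. If $W < D$, no project set can achieve diversity $D$, so we return no. Otherwise, we have
\[
D \;\le\; W \;=\; \sum_{e\in E} \lambda(e) \;\le\; \val_\lambda \cdot |E| \;\le\; \val_\lambda \cdot (n-1),
\]
since $\Tree$ is a tree on $n$ vertices. Hence after this single check we may assume $D \in \Oh(\val_\lambda \cdot n)$.

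Second, I would invoke the dynamic programming algorithm of Proposition~\ref{pps:wcode=1-D}, which solves \NAP[0]{c_i}{1}{2} in $\Oh(D^2 \cdot n)$ time. Plugging in the bound $D \in \Oh(\val_\lambda \cdot n)$ obtained from the preprocessing, the overall running time becomes
\[
\Oh(D^2 \cdot n) \;=\; \Oh\bigl((\val_\lambda \cdot n)^2 \cdot n\bigr) \;=\; \Oh\bigl((\val_\lambda)^2 \cdot n^3\bigr),
\]
which is the claimed bound. The preprocessing time $\Oh(n)$ is absorbed into this expression.

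There is no substantial technical obstacle here: correctness reduces to the trivial upper bound $PD_\Tree(S)\le \sum_{e\in E} \lambda(e)$ for every project set $S$, and the running time analysis is a single substitution. The only point worth being careful about is to make sure that the weight sum is evaluated before calling the Proposition~\ref{pps:wcode=1-D} algorithm, so that the latter is only executed on instances where $D$ is already polynomially bounded in $\val_\lambda$ and $n$.
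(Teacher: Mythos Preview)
Your proposal is correct and follows exactly the same approach as the paper: check whether $\sum_{e\in E}\lambda(e)<D$ to reject trivially infeasible instances, then use the resulting bound $D\le \val_\lambda\cdot(n-1)$ inside the $\Oh(D^2\cdot n)$ algorithm of Proposition~\ref{pps:wcode=1-D}. There is nothing to add.
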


\subsection{Unit Costs for each Project}
\label{subsec:unitc}
Second, we consider~\unitc---the special case of \GNAP in which every project with a positive \sprob has the same cost.
In the following we use the term \emph{solution} to denote only those projects of cost 1 which have been chosen.
Further, with~$w(x_i)$ we denote the \sprob of the project in~$P_i$ which costs~1.

We consider the following basic problem.
\Problem{Penalty-Sum}
{A set of tuples~$T=\{ t_i=(a_i,b_i) \mid i\in [n], a_i\in \mathcal{Q}_{\ge 0},b_i\in (0,1) \}$,
	two integers~$k,Q$,
	a number~$D\in \mathbb{Q}_+$}
{Is there a subset~$S\subseteq T$ of size~$k$ such that
	$\sum_{t_i\in S} a_i - Q\cdot \prod_{t_i\in S} b_i \ge D$}

In an earlier version of this work~\cite{GNAP-CTW}, we asked for complexity of \PS. It has been proven meanwhile that \PS is NP-hard~\cite{MaxNPD}.
We now present a polynomial-time many-to-one reduction from~\PS to~\unitc, giving the following hardness result.
\begin{theorem}
	\label{thm:PS-hardness}
	\unitc is NP-hard, even on~$X$-trees with height at most~2 and a root~$r$ of degree~1.
\end{theorem}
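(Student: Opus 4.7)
The plan is to establish the biconditional by exhibiting two polynomial-time many-to-one reductions, one in each direction, that relate instances of \PS to instances of \unitc on $X$-trees of height~$2$ with root of degree~$1$.

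For the reduction from \PS to the restricted \unitc, I would take an instance $(T,k,Q,D)$ of \PS (we may assume $Q\ge 0$, since otherwise the problem degenerates) and build an $X$-tree~$\Tree$ consisting of the root~$r$, its unique child~$v$, and the $n$ leaves $x_1,\dots,x_n$ attached below~$v$. For each taxon~$x_i$ I set the \sprop of the cost-$1$ project to $w(x_i):=1-b_i$, so that for a chosen set $S\subseteq [n]$ the survival probability computed at~$v$ becomes exactly $1-\prod_{i\in S}b_i$. Ideally one would use the edge weights $\lambda((r,v))=Q$ and $\lambda((v,x_i))=a_i/(1-b_i)$, yielding $PD_{\Tree}(S)=Q+\sum_{i\in S}a_i-Q\prod_{i\in S}b_i$. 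Because edge weights must be integral, I would rescale by the common denominator~$M$ of the rationals $a_i/(1-b_i)$, which has polynomial bit-length, and set $\lambda((r,v)):=MQ$, $\lambda((v,x_i)):=Ma_i/(1-b_i)$, budget $B:=k$, and threshold $M(Q+D)$. The \unitc relaxation ``$|S|\le B$'' and the \PS requirement ``$|S|=k$'' then coincide, since $Q,a_i\ge 0$ implies that enlarging~$S$ never decreases the \PS objective.

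For the converse, I would take a \unitc instance on such a restricted tree, letting~$v$ denote the unique child of the root~$r$, $\lambda_0:=\lambda((r,v))$, $\lambda_i:=\lambda((v,x_i))$, and $w_i$ the \sprop of the cost-$1$ project of~$x_i$. After preprocessing the two degenerate situations---removing taxa with $w_i=0$ (whose cost-$1$ project is never beneficial) and handling any occurrence of $w_i=1$ in polynomial time (once such a taxon is selected, the product $\prod_j(1-w_j)$ collapses to~$0$, reducing the remaining optimization to a straightforward top-selection on the values $\lambda_j w_j$)---I may assume $w_i\in(0,1)$ for every~$i$. The \PS instance is then produced by setting $a_i:=\lambda_iw_i$, $b_i:=1-w_i$, $Q:=\lambda_0$, $k:=B$, and $D_{\PS}:=D-\lambda_0$; correctness follows from the identity $PD_{\Tree}(S)=\lambda_0+\sum_{i\in S}a_i-\lambda_0\prod_{i\in S}b_i$.

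The main obstacle lies in respecting the syntactic constraints of \PS ($b_i\in(0,1)$ strictly and $D\in\mathbb{Q}_+$). The $w_i\in\{0,1\}$ corner cases are handled by the preprocessing step described above. The more delicate point is that $D_{\PS}=D-\lambda_0$ may be non-positive; I plan to resolve this by observing that the forward reduction always produces \unitc instances satisfying $D>\lambda_0$, so the NP-hardness it establishes is already witnessed by the subfamily with $D>\lambda_0$, on which the converse reduction produces a legal \PS instance with $D_{\PS}>0$. This preserves the biconditional form of the theorem while keeping both reductions within the required syntactic format.
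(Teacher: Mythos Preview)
Your construction is essentially the paper's: two many-one reductions built on the identity $PD_{\Tree}(S)=\lambda_0\bigl(1-\prod_{i\in S}(1-w_i)\bigr)+\sum_{i\in S}\lambda_i w_i$, with the dictionary $a_i=\lambda_i w_i$, $b_i=1-w_i$, $Q=\lambda_0$. The paper clears denominators with a factor $2^t$; your least common denominator $M$ is a cleaner way to force integral edge weights, but the idea is the same. You are in fact more careful than the paper, which in its converse reduction simply sets $D':=D-\lambda((r,v))$ and never discusses the boundary cases $w_i\in\{0,1\}$ or $D'\le 0$.

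There is, however, a genuine gap in your handling of $D_{\PS}\le 0$. What you actually establish is $\PS \le \unitc[D>\lambda_0]$ and $\unitc[D>\lambda_0]\le \PS$, hence $\PS$ is NP-hard iff the subfamily $\unitc[D>\lambda_0]$ is. The theorem, though, speaks about the full restricted \unitc (height~$2$, root degree~$1$) without the extra constraint $D>\lambda_0$. From ``restricted \unitc is NP-hard'' you cannot conclude ``$\unitc[D>\lambda_0]$ is NP-hard'' absent a reduction from the larger class to the smaller one; a subproblem may well be easier. So your argument does not yet deliver the $(\Leftarrow)$ direction.

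The repair is elementary and fits your framework. After forming $a_i:=\lambda_i w_i$, $b_i:=1-w_i$, $Q:=\lambda_0$, $k:=B$, and $D_{\PS}:=D-\lambda_0$, pick any rational $c>0$ with $kc>-D_{\PS}$ (handling $k=0$ trivially) and replace each $a_i$ by $a_i+c$. Because every feasible $S$ has $|S|=k$, the \PS objective shifts by exactly $kc$, so the new threshold $D_{\PS}+kc$ is strictly positive and equivalence with the original \unitc instance is preserved. With this shift your converse reduction lands in syntactically valid \PS for \emph{all} restricted \unitc inputs, and the biconditional closes.
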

\begin{proof}
	\pfpara{Reduction}
	Let~$\Instance = (T,k,Q,D)$ be an instance of~\PS.
	Let~$\bin(a)$ and~$\bin(1-b)$ be the maximum binary encoding length of any~$a_i$ and~$1-b_i$, respectively, and let~$W := 2^{\bin(a)+\bin(1-b)}$.
	We define an instance~$\Instance' = (\Tree,\mathcal{P},B,D')$ of~\unitc as follows:
	Let~$\Tree$ contain the vertices~$V:=\{r,v,x_1,\dots,x_{|T|}\}$ and let~$\Tree$ be a star with center~$v$ and root~$r$.
	We define~$\lambda(r,v)=W Q$ and~$\lambda(v,x_i) = W \cdot (\nicefrac{a_i}{1-b_i})$ for each~$t_i\in T$.
	For each tuple~$t_i$, we define a project list~$P_i := ((0,0),(1,1-b_i))$. Then,~$\mathcal P$ is defined to be the set of these project lists.
	Finally, we set~$B:=k$, and~$D':=W (D+Q)$.
	The reduction can clearly be  computed in polynomial time.
	
	\pfpara{Correctness}
	We show that~\Instance is a yes-instance of~\PS if and only if~$\Instance'$ is a yes-instance of~\unitc.

		($\Rightarrow$)
	Let~$S$ be a solution for~\Instance of~\PS. Let the set~$S'$ contain the project~$(1,w(x_i))$ if and only if~$t_i\in S$.
	Then,
	\begin{eqnarray*}
		&& \sum_{x_i\in S'} \lambda(v,x_i) \cdot w(x_i) + \lambda(r,v)\cdot \left(1-\prod_{x_i\in S'} (1- w(x_i))\right)\\
		&=& \sum_{t_i\in S'} W\cdot(\nicefrac{a_i}{1-b_i})\cdot (1-b_i) + W Q\cdot \left(1-\prod_{t_i\in S'} (1-(1-b_i))\right)\\
		&=& \sum_{t_i\in S} W a_i - W Q\cdot \prod_{t_i\in S} b_i + W Q \ge W \cdot (D + Q) = D'.
	\end{eqnarray*}
	Because~$|S'|=|S|\le B$, we have that~$S'$ is a solution for the instance~$\Instance'$ of~\unitc.
		
		($\Leftarrow$)
	Let~$S'$ be a solution for the instance~$\Instance'$ of~\unitc.
	We conclude\linebreak that~$\sum_{x_i\in S'} \lambda(v,x_i) \cdot w(x_i) \ge D' - \lambda(r,v)\cdot \left(1-\prod_{x_i\in S'} (1- w(x_i))\right)$.
	Define the set~$S\subseteq T$ to contain a tuple~$t_i$ if and only if~$S'$ contains a project of the taxon~$x_i$.
	Then,
	\begin{eqnarray*}
		&&\sum_{t_i\in S} a_i - Q\cdot \prod_{t_i\in S} b_i\\
		&=& \sum_{x_i\in S'} W^{-1}\cdot \lambda(v,x_i)\cdot (1-b_i) - W^{-1}\cdot \lambda(r,v)\cdot \prod_{x_i\in S'} (1- w(x_i))\\
		&=& W^{-1}\cdot \left(\sum_{x_i\in S'} \lambda(v,x_i)\cdot w(x_i) - \lambda(r,v)\cdot \prod_{x_i\in S'} (1- w(x_i))\right)\\
		&\ge& W^{-1}\cdot \left(D' - \lambda(r,v)\cdot \left(1-\prod_{x_i\in S'} (1- w(x_i))\right) - \lambda(r,v)\cdot \prod_{x_i\in S'} (1- w(x_i))\right)\\
		&=& W^{-1}\cdot \left(D' - \lambda(r,v)\right) = W^{-1} \cdot (W(D+Q) - W Q) = D.
	\end{eqnarray*}
	Because~$|S'|=|S|\le B$, we may conclude that~$S$ is a solution of~\PS.
\end{proof}

%%%%%%%%%%%%
%%% greedy-algo %%%
%%%%%%%%%%%%
%\kommentar{
%\todosi{We did not include the following Proposition in the CTW-version. However, I think it is correct. Do you think we can add it to arXiv or to the journal-version?}
%\begin{proposition}
%\label{pps:C=1,height=2,ultrametric}
%\unitc can be solved greedily when the given~$X$-tree~$\Tree$ is ultrametric and has a~$\height_\Tree$ of at most two.
%\end{proposition}
%\begin{proof}
%\todos[inline]{ToDo}
%\end{proof}
%}
%%%%%%%%%%%%
%%% greedy-algo %%%
%%%%%%%%%%%%

Recall that in an ultrametric tree, the weighted distance from the root to all vertices is the same.
Observe that \unitc can be solved exactly on ultrametric trees of height at most~2 by greedily selecting a taxon which adds the highest estimated diversity.
In the following theorem, we show that~\unitc is NP-hard even when restricted to ultrametric trees of height at most~3.
\begin{theorem}
	\label{thm:C=1,height=3,ultrametric}
	\unitc is NP-hard on ultrametric trees of height at most~3.
\end{theorem}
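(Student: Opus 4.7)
\medskip
\noindent
\textbf{Proof plan.}
The plan is to reduce from \PS, which is \NP-hard by assumption. Given an instance $\Instance = (T, k, Q, D)$ of \PS, I construct an instance $\Instance' = (\Tree, \mathcal{P}, B, D')$ of \unitc where $\Tree$ is ultrametric of height~$3$. The tree has root $r$ with a single child $v$, and $v$ has one child $c_i$ for each tuple $t_i \in T$; each $c_i$ carries a gadget of $m_i$ leaf children at depth~$3$. Writing $\alpha := \lambda((r,v))$, $\beta_i := \lambda((v,c_i))$ and $\gamma_i := \lambda((c_i,y_{i,j}))$ (the latter common across leaves under~$c_i$, as forced by ultrametricity since they lie at the same depth), the ultrametric constraint becomes $\alpha + \beta_i + \gamma_i = p$ for every~$i$. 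Only the sum $\beta_i + \gamma_i = p - \alpha$ must be constant across tuples, while $\beta_i$ itself, the number $m_i$ of leaves, and the survival probabilities $w_{i,j}$ can be chosen per tuple, yielding exactly the degrees of freedom needed to encode $(a_i,b_i)$.

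The core step is to choose, for each tuple $t_i$, values $m_i$, $(w_{i,j})_{j=1}^{m_i}$, and $\beta_i$ such that
\[
\prod_j (1 - w_{i,j}) = b_i
\qquad\text{and}\qquad
\beta_i(1 - b_i) + \gamma_i \sum_j w_{i,j} = M \cdot a_i
\]
for a sufficiently large integer scale $M$. Setting $\alpha := M Q$, a direct calculation then shows that the \unitc configuration which ``activates'' exactly the tuples in some $A \subseteq T$ (protects all $m_i$ leaves under $c_i$ for $i \in A$ and no leaves elsewhere) achieves $PD_\Tree = MQ + M \sum_{i \in A} a_i - MQ \prod_{i \in A} b_i$. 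Taking $D' := M(D+Q)$ and letting $B$ be the total number of leaves across the $k$ intended activated gadgets (e.g.\ $B := km$ when all $m_i = m$) makes the forward direction immediate via the encoding identity. Feasibility of the two displayed equations per tuple is ensured by noting that, given $\prod_j(1-w_{i,j}) = b_i$ and $w_{i,j} \in (0,1)$, the admissible range for $\sum_j w_{i,j}$ is $[\,1 - b_i,\ m_i(1 - b_i^{1/m_i})\,]$, a range that widens toward $-\log b_i$ as $m_i$ grows; choosing $m_i$ polynomially large suffices, and $\beta_i$ is then determined by the second equation.

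The backward direction is the main obstacle: a priori an optimal \unitc solution need not respect the ``all-or-nothing per gadget'' structure assumed by the encoding, and the per-cost marginal gain of the first leaf of a fresh gadget can exceed that of completing a partially activated gadget, so simple local dominance is not available. To overcome this, I plan a tight-budget exchange argument. Pick $B$ so that the only way to reach the required number of protected projects is a pattern of the form ``fully activate some $A \subseteq T$ of size~$k$'', and show that any non-canonical configuration can be rewritten by budget-neutral swaps of single projects without decreasing $PD_\Tree$, using the fact that spreading a given number of protected leaves across half-activated gadgets is strongly penalized by the multiplicative factor $\prod_{\text{protected}}(1-w)$ appearing in the root-edge term $\alpha(1 - \prod_{\text{protected}}(1-w))$. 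The delicate point will be calibrating $m_i$, the $w_{i,j}$, and $\beta_i$ so that the root-edge penalty dominates the marginal gain of ``first leaves'' of additional gadgets whenever the canonical activation is not yet complete. Once this all-or-nothing canonical form is secured, inverting the encoding identity reads off a \PS solution $A$ of size $k$ satisfying $\sum_{i \in A} a_i - Q\prod_{i\in A} b_i \ge D$.
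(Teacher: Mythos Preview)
Your forward direction and the feasibility discussion for the two per-tuple equations are plausible, but the backward direction is a genuine gap, and it is not clear your plan can be completed. You write that you will ``pick $B$ so that the only way to reach the required number of protected projects is a pattern of the form `fully activate some $A\subseteq T$ of size $k$'\,'', yet with $B=km$ nothing forces this structure: spreading the $km$ protected leaves over more than $k$ gadgets is budget-feasible, and the root-edge term $\alpha\bigl(1-\prod_{\text{protected}}(1-w)\bigr)$ can \emph{increase} under such spreading (the product over protected leaves can become smaller, not larger), so your claimed ``root-edge penalty dominates'' heuristic points the wrong way. The exchange argument you sketch would have to trade off this root-edge gain against losses on the $\beta_i,\gamma_i$ edges, and you give no calibration that makes this work simultaneously for all tuples. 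There are also secondary feasibility issues you skip: edge weights must be positive integers and survival probabilities rational, so ``$\beta_i$ is then determined by the second equation'' need not yield an integer, and hitting a prescribed rational $\sum_j w_{i,j}$ while keeping $\prod_j(1-w_{i,j})=b_i$ and all $w_{i,j}$ rational with polynomial encoding length requires an argument.

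The paper sidesteps the backward-direction obstacle entirely by reducing in two steps. First, Proposition~\ref{pps:unitc-hardness} already gives that \PS-hardness implies hardness of \unitc on height-$2$ trees with $\deg(r)=1$. The proof of Theorem~\ref{thm:C=1,height=3,ultrametric} then reduces from that height-$2$ problem: each original leaf $x_i$ whose edge is not of maximum weight gets its edge subdivided by a new vertex $u_i$, and $u_i$ receives one additional ``dummy'' leaf $x_i^*$ with survival probability $1-2^{-t}$ strictly exceeding every original probability. Edge weights are set so the tree is ultrametric, and the budget is increased by $|X^*|$. Because the dummy leaves have the highest survival probability and the tree is ultrametric (so all root-to-leaf paths have equal weight), any solution can be converted by single swaps into one containing all of $X^*$ without decreasing $PD_{\Tree'}$; stripping $X^*$ then yields a solution to the height-$2$ instance. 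This simple exchange is exactly what your multi-leaf gadgets lack.
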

\begin{proof}
	By Theorem~\ref{thm:PS-hardness}, it is sufficient to reduce from \unitc with the restriction that the root has only one child and the height of the tree is~2.
	
	\pfpara{Reduction}
	Let~$\Instance = (\Tree,\mathcal{P},B,D)$ be an instance of~\unitc in which the root~$r$ of the~$X$-tree~$\Tree$ has only one child~$v$ and~$\height_\Tree=2$.
	Without loss of generality, assume for each~$i\in [|X|-1]$ that~$\lambda(v,x_i)\ge \lambda(v,x_{i+1})$ and there is a fixed~$s\in [|X|]$ with~$w(x_s)\ge w(x_j)$ for each~$j\in [|X|]$.
	Observe, that by the reduction in Theorem~\ref{thm:PS-hardness} we may assume~$w(x_j) \ne 1$ for each~$x_j\in X$.
	
	We define an instance~$\Instance' := (\Tree',\mathcal{P},B',D')$ of~\unitc as follows:
	Let~$X_1\subseteq X$ be the set of vertices~$x_i$ with~$\lambda(v,x_1)=\lambda(v,x_i)$. If~$X_1=X$, then~$\Instance$ is already ultrametric and the reduction may simply output~$\Instance$.
	Assume otherwise and define~$X_2 := X\setminus X_1$.
	Fix an integer~$W\in\mathbb{N}$ that is large enough such that~$1-W^{-1}>w_{s,1}$ and~$W\cdot \lambda(v,x_{|X|})>\lambda(v,x_1)$.
	Define a tree~$\Tree'=(V',E',\lambda')$, in which~$V'$ contains the vertices~$V$ and for every~$x_i\in X_2$, we add two vertices~$u_i$ and~$x_i^*$.
	Let~$X^*$ be the set that contains all~$x_i^*$.
	The set of edges is defined by~$E'=\{(r,v)\}\cup \{(v,x_i) \mid x_i\in X_1\} \cup \{ (v,u_i), (u_i,x_{i}), (u_i,x_i^*) \mid x_i\in X_2 \}$. Observe that the leaf set of~$\Tree$ is~$X\cup X^*$.
	The weights of the edges are defined by:
	\begin{itemize}
		\item $\lambda'(r,v) = \lambda(r,v) \cdot W^{|X_2|} \cdot (W - 1)$.
		\item For~$x_i\in X_1$ we set~$\lambda'(v,x_i) = (W - 1) \cdot \lambda(v,x_1)$.
		\item For~$x_i\in X_2$ we set~$\lambda'(v,u_i) = W \cdot (\lambda(v,x_1) - \lambda(v,x_i))$\\
		and~$\lambda'(u_i,x_{i}) = \lambda'(u_i,x_i^*) = (W\cdot \lambda(v,x_i)) - \lambda(v,x_1)$.
	\end{itemize}
	
	The collection~$\mathcal P'$ contains the unchanged project lists~$P_i$ for taxa~$x_i\in X$ and we add the project lists~$P_{i}^*:=((0,0),(1,1-W^{-1}))$ for each taxon~\mbox{$x_i^*\in X^*$}.
	Finally, we define~$B':=B+|X^*|$ and $$D'=(W-1)\cdot \left(D + |X_2|\cdot (W-1)\cdot \lambda(v,x_1)+\left(W^{|X_1|}-1\right)\cdot \lambda(w,v)\right).$$
	Figure~\ref{fig:reduction} depicts an example of this reduction.
\begin{figure}[t]
\centering
\begin{tikzpicture}[xscale=1.1,yscale=0.9,
		sibling distance=5em,
		every node/.style = {
			shape=circle,
			fill=white,
			rounded corners,
						minimum size=9,
						inner sep=0,
			draw,}
	]
	\tikzstyle{txt}=[circle,fill=none,draw=white,inner sep=0pt]

		\draw (-2,-1) -- (2,-1);
	
	\node (r) at (0,0) {};
	\node (v) at (0,-1) {};
	\node (x1) at (-2,-7) {};
	\node (x2) at (-1,-7) {};
	\node (x3) at (0,-5) {};
	\node (x4) at (1,-3) {};
	\node (x5) at (2,-2) {};

	\node[txt] at (-0.25,0.25) {$r$};
	\node[txt] at (-0.25,-0.75) {$v$};
	\node[txt] at (-2,-7.45) {$x_1$};
	\node[txt] at (-1,-7.45) {$x_2$};
	\node[txt] at (0,-5.45) {$x_3$};
	\node[txt] at (1,-3.45) {$x_4$};
	\node[txt] at (2,-2.45) {$x_5$};

	\node[txt] at (0.25,-0.5) {$1$};
	\node[txt] at (-2.25,-4) {$6$};
	\node[txt] at (-1.25,-4) {$6$};
	\node[txt] at (-0.25,-3) {$4$};
	\node[txt] at (0.75,-2) {$2$};
	\node[txt] at (1.75,-1.5) {$1$};

	\draw[-stealth] (r) -- (v);
	\draw[-stealth] (-2,-1) -- (x1);
	\draw[-stealth] (-1,-1) -- (x2);
	\draw[-stealth] (v) -- (x3);
	\draw[-stealth] (1,-1) -- (x4);
	\draw[-stealth] (2,-1) -- (x5);
\end{tikzpicture}
\hfill
\begin{tikzpicture}[xscale=1.6, yscale=0.9,
		sibling distance=5em,
		every node/.style = {
			shape=circle,
			fill=white,
			rounded corners,
			minimum size=9,
						inner sep=0,
			draw}
	]
	\tikzstyle{txt}=[circle,fill=none,draw=none,inner sep=0pt]

		%lines across u3,u4,u5
		\draw (-0.25,-3.286) -- (0.25,-3.286);
	\draw (0.75,-5.571) -- (1.25,-5.571);
	\draw (1.75,-6.5) -- (2.25,-6.5);
	
		\node[txt] at (1.2,-0.5) {$2^9\cdot 7=3584$};
		\draw (-2,-1) -- (2,-1);

		\node[txt] at (1.5,-6.8) {$2$};

	\node[txt] at (-0.25,0.25) {$r$};
	\node[txt] at (-0.25,-0.75) {$v$};
	\node[txt] at (-2,-7.45) {$x_1$};
	\node[txt] at (-1,-7.45) {$x_2$};
	\node[txt] at (-0.25,-7.45) {$x_3$};
	\node[txt] at (0.75,-7.45) {$x_4$};
	\node[txt] at (1.75,-7.45) {$x_5$};
	\node[txt] at (0.25,-7.45) {$x_3^*$};
	\node[txt] at (1.25,-7.45) {$x_4^*$};
	\node[txt] at (2.25,-7.45) {$x_5^*$};
	\node[txt] at (-0.25,-3.036) {$u_3$};
	\node[txt] at (0.75,-5.321) {$u_4$};
	\node[txt] at (1.75,-6.264) {$u_5$};

	\node (r) at (0,0) {};
	\node (v) at (0,-1) {};
	\node (u3) at (0,-3.286) {};
	\node (u4) at (1,-5.571) {};
	\node (u5) at (2,-6.514) {};
	\node (x1) at (-2,-7) {};
	\node (x2) at (-1,-7) {};
	\node (x3) at (-0.25,-7) {};
	\node (x4) at (0.75,-7) {};
	\node (x5) at (1.75,-7) {};
	\node (x3*) at (0.25,-7) {};
	\node (x4*) at (1.25,-7) {};
	\node (x5*) at (2.25,-7) {};

	\node[txt] at (-2.25,-4) {$42$};
	\node[txt] at (-1.25,-4) {$42$};
	\node[txt] at (-0.25,-2.1) {$16$};
	\node[txt] at (0.75,-3.3) {$32$};
	\node[txt] at (1.75,-3.9) {$40$};

	\draw[dashed,->] (r) -- (v);
	
	\draw[-stealth] (-2,-1) -- (x1);
	\draw[-stealth] (-1,-1) -- (x2);

	\draw[-stealth] (v) -- (u3);
	\draw[-stealth] (1,-1) -- (u4);
	\draw[-stealth] (2,-1) -- (u5);

	\draw[-stealth] (-0.25,-3.286) -- (x3);
	\draw[-stealth] (0.25,-3.286) -- (x3*);
	\draw[-stealth] (0.75,-5.571) -- (x4);
	\draw[-stealth] (1.25,-5.571) -- (x4*);
	\draw[-stealth] (1.75,-6.5) -- (x5);
	\draw[-stealth] (2.25,-6.5) -- (x5*);
	\node[txt] at (0,-5.1) {$26$};
	\node[txt] at (1,-6.3) {$10$};
\end{tikzpicture}
\caption{An example of the reduction presented in Theorem~\ref{thm:C=1,height=3,ultrametric}, where the left side shows an example-instance~\Instance and  the right side shows the instance~$\Instance'$. Here, the~\sprobs are omitted and we assume that~$t=3$.}
\label{fig:reduction}
\end{figure}
Observe that~$t$ can be chosen such that~$t\in \Oh(\wcode + \val_\lambda)$. Hence, the reduction can be computed in polynomial time.
Moreover, the~$X$-tree~$\Tree'$ has a height of~3.

Before showing the correctness, we first prove that~$\Tree'$ is ultrametric.
In the rest of the proof we denote with~$\lambda_{w}$ the value~$\lambda(e)$, where~$e$ is the only edge that is directed towards~$w$. To show that~$\Tree'$ is ultrametric, for each~$x\in X\cup X^*$ we show that the paths from~$v$ to~$x$ have the same length as the edge from~$v$ to~$x_1$. This is sufficient because every path from the root to a taxon visits~$v$.
By definition, the claim is correct for every~$x_i\in X_1$.
For an~$x_i\in X_2$, the path from~$v$ to~$x_i$ is
\begin{eqnarray*}
	\lambda'_{u_i} + \lambda'_{x_i} = W \cdot (\lambda_{x_1} - \lambda_{x_i}) + W\cdot \lambda_{x_i} - \lambda_{x_1} = (W-1) \cdot \lambda_{x_1} = \lambda'_{x_1}
\end{eqnarray*}
By definition, this is also the length of the path from~$v$ to~$x_i^*\in X_i^*$.
We conclude that~$\Tree'$ is an ultrametric tree.

\pfpara{Correctness}
We now show that~\Instance is a yes-instance of~\unitc if and only if~$\Instance'$ is a yes-instance of~\unitc.

($\Rightarrow$)
Let~$S\subseteq X$ be a set of taxa.
Define~$S' := S \cup X^*$.
We show that~$S$ is a solution for~\Instance if and only if~$S'$ is a solution of~$\Instance'$.
First,~$|S'|=|S|+|X^*|$ and hence~$|S|\le B$ if and only if~$|S'|\le B'=B+|X^*|$. We compute the phylogenetic diversity of~$S'$ in~$\Tree'$. Here, we first consider the diversity from the subtrees below~$v$ and then the diversity from the edge~$(r,v)$.
For the subtrees rooted at~$v$ containing a leaf of~$X_1$ the total contribution is
$$D_1 = \sum_{x_i\in X_1 \cap S} \lambda'_{x_1}\cdot w(x_i)=\sum_{x_i\in X_1 \cap S} (W - 1) \cdot \lambda_{x_1}\cdot w(x_i).$$

For the subtrees rooted at~$v$ where~$S'$ contains~$x_i^*$ but not~$x_i$, the total contribution is
\begin{eqnarray*}
	D_2&=& \sum_{x_i\in X_2 \setminus S} (\lambda'_{u_i} + \lambda'_{x_i})\cdot w(x_i^*)\\
	&=& \sum_{x_i\in X_2 \setminus S} (W \cdot (\lambda_{x_1} - \lambda_{x_i}) + W\cdot \lambda_{x_i} - \lambda_{x_1})\cdot (1-W^{-1})\\
	&=& \sum_{x_i\in X_2 \setminus S} (W-1) \cdot \lambda_{x_1}\cdot (1-W^{-1}).
\end{eqnarray*}
For the subtrees rooted at~$v$ where~$S'$ contains~$x_i^*$ and~$x_i^*$, the total contribution is
\begin{eqnarray*}
	D_3&=& \sum_{x_i\in X_2 \cap S} \lambda'_{u_i}\cdot (1-(1-w(x_i^*))\cdot (1-w(x_i)))\\
	&& +\sum_{x_i\in X_2 \cap S} \lambda'_{x_i}\cdot w(x_i) + \sum_{x_i\in X_2 \cap S} \lambda'_{x_i}\cdot w(x_i^*)\\
	&=& \sum_{x_i\in X_2 \cap S} W \cdot (\lambda_{x_1} - \lambda_{x_i})\cdot (1-W^{-1}\cdot (1-w(x_i)))\\
	&& +\sum_{x_i\in X_2 \cap S} (W\cdot \lambda_{x_i} - \lambda_{x_1})\cdot (w(x_i)+1-W^{-1})\\
	&=& \sum_{x_i\in X_2 \cap S} (\lambda_{x_1} - \lambda_{x_i})\cdot (W - (1-w(x_i)))\\
	&& +\sum_{x_i\in X_2 \cap S} (W\cdot \lambda_{x_i} - \lambda_{x_1})\cdot (w(x_i)+1-W^{-1})\\
	&=& \sum_{x_i\in X_2 \cap S} \left[ \lambda_{x_1}\cdot (W - (1-w(x_i))-w(x_i)-1+W^{-1}))\right.\\
	&& \left. + \lambda_{x_i}\cdot (-W + (1-w(x_i))+W (w(x_i)+1-W^{-1}))\right] \\
	&=& \sum_{x_i\in X_2 \cap S} \left[ \lambda_{x_1}\cdot (W-2+W^{-1}) + \lambda_{x_i}\cdot (W-1) \cdot w(x_i)\right]\\
	&=& \sum_{x_i\in X_2 \cap S} (W-1)\cdot \left[ \lambda_{x_1}\cdot (1-W^{-1}) + \lambda_{x_i}\cdot w(x_i)\right].
\end{eqnarray*}
Finally, for the edge~$(r,v)$ the contribution is
\begin{eqnarray*}
	D_0&=& \lambda'_v\cdot \left( 1- \prod_{x_i^*\in X_2} (1-w(x_i^*))\cdot \prod_{x_i\in S} (1-w(x_i)) \right)\\
	&=& \lambda_v \cdot W^{|X_2|} \cdot (W - 1)\cdot \left( 1- W^{-|X_2|}\cdot \prod_{x_i\in S} (1-w(x_i))\right)\\
	&=& \lambda_v \cdot W^{|X_2|} \cdot (W - 1) - \lambda_v \cdot (W - 1)\cdot \prod_{x_i\in S} (1-w(x_i)).
\end{eqnarray*}
Altogether we conclude
\begin{eqnarray*}
	&& PD_{\Tree'}(S')\\
	&=& D_0 + D_1 + D_2 + D_3\\ % \lambda'_v\cdot \prod_{x_i^*\in X^*} (1-w(x_i^*))\cdot \prod_{x_i\in S} (1-w(x_i))\\
	% && +\sum_{x_i\in X_1 \cap S} \lambda'_{x_1}\cdot w(x_i)\\
	% && +\sum_{x_i\in X_2 \cap S} \lambda'_{u_i}\cdot (1-(1-w(x_i^*))\cdot (1-w(x_i)))\\
	% && +\sum_{x_i\in X_2 \cap S} \lambda'_{x_i}\cdot w(x_i) + \sum_{x_i\in X_2 \cap S} \lambda'_{x_1}\cdot (1-W)\\
	% && +\sum_{x_i\in X_2 \setminus S} (\lambda'_{u_i} + \lambda'_{x_i})\cdot w(x_i^*)\\
	&=& \lambda_v \cdot W^{|X_2|} \cdot (W - 1) - \lambda_v \cdot (W - 1)\cdot \prod_{x_i\in S} (1-w(x_i))\\
	&& +\sum_{x_i\in X_1 \cap S} (W - 1) \cdot \lambda_{x_1}\cdot w(x_i)\\
	&& +\sum_{x_i\in X_2 \setminus S} (W-1) \cdot \lambda_{x_1}\cdot (1-W^{-1})\\
	&& +\sum_{x_i\in X_2 \cap S} (W-1)\cdot ( \lambda_{x_1}\cdot (1-W^{-1}) + \lambda_{x_i}\cdot w(x_i))\\
        % (W-1)\cdot \sum_{x_i\in X_2 \cap S} \lambda_{x_1}\cdot (1-W^{-1}) + \lambda_{x_i}\cdot w(x_i)
	&=& (W-1)\cdot \left[ \lambda_v \cdot W^{|X_2|} - \lambda_v \cdot \prod_{x_i\in S} (1-w(x_i))\right.\\
	&& \left.+\sum_{x_i\in S} \lambda_{x_i}\cdot w(x_i) + \sum_{x_i\in X_2} \lambda_{x_1}\cdot (1-W^{-1})\right]\\
	&=& (W-1)\cdot \left[ \lambda_v \cdot \left(1- \prod_{x_i\in S} (1-w(x_i))\right) + \sum_{x_i\in S} \lambda_{x_i}\cdot w(x_i) \right.\\
	&& \left.+\lambda_v \cdot (W^{|X_2|}-1) + \sum_{x_i\in X_2} \lambda_{x_1}\cdot (1-W^{-1})\right]\\
	&=& (W-1)\cdot \left[ PD_\Tree(S) + (W^{|X_2|}-1)\cdot \lambda_v + |X_2|\cdot (1-W^{-1})\cdot \lambda_{x_1}\right]
\end{eqnarray*}
It directly follows that~$PD_{\Tree'}(S')\ge D'$ if and only if~$PD_{\Tree}(S)\ge D$.

($\Leftarrow$)
We show that~$\Instance'$ has a solution~$S'$ with~$X^*\subseteq S'$. Then,~$S'\setminus X^*$ is a solution for~\Instance as shown in the proof of the converse direction. 

Let~$S'$ be a solution for~$\Instance'$ that contains a maximum number of elements of~$X^*$ among all solutions.
If~$X^*\subseteq S'$, then we are done.
Otherwise, choose some~$x_i^* \not\in S'$. We show that there is a solution containing~$x_i^*$ and all elements of~$S'\cap X^*$, contradicting the choice of~$S'$.
If~$x_i\in S'$, then consider the set~$S_1 := (S'\setminus\{x_i\}) \cup \{x_i^*\}$. Now,~$|S_1|=|S'|\le B'$
and because~$w(x_i^*)=1-W^{-1}>w(x_i)$, we may conclude that~$PD_{\Tree'}(S_1)>PD_{\Tree'}(S')\ge D'$.
If~$S'\subset X^*$, then consider the set~$S_2 := X^*$. Now,~$|S_2|=|X^*|\le B'$ and~$PD_{\Tree'}(S_2)\ge PD_{\Tree'}(S')\ge D'$.
Finally, assume that~$x_i,x_i^*\not\in S'$ and~$x_j\in S'$ for some~$j\ne i$.
Again,~$w(x_i^*)=1-W^{-1}>w(x_i)$ and we know that the length of the path from~$v$ to~$x_i^*$ and~$x_j$ is the same. Consider the set~$S_3 := (S'\setminus\{x_j\}) \cup \{x_i^*\}$ and observe~$|S_3|=|S'|\le B'$. Moreover, by the above, \mbox{$PD_{\Tree'}(S_3)>PD_{\Tree'}(S')\ge D'$}.
\end{proof}

\section{Discussion}
In this paper, we studied the \GNAPLong (\GNAP), a natural generalization of the classical problem \NAP[0]{c_i}{1}{}.
We established several tractability and intractability results for \GNAP{} and some of its special cases.
Specifically, we showed that \GNAP is \W{1}-hard with respect to the number of taxa,~$|X|$, but can be solved in polynomial time when there are only a constant number of costs and \sprobs of projects.
We further introduced \unitc, a problem where all taxa are saved at the same cost, but may differ in their \sprob.
We showed that \unitc is \NP-hard, even on ultrametric trees of height~three.

Naturally, several open questions remain.
For example, it is not known whether \GNAP~admits a pseudopolynomial-time algorithm or whether \GNAP is strongly NP-hard.
Moreover, it remains open whether \GNAP is \FPT with respect to~$\var_c+\var_w$
or the budget~$B$. The latter is unresolved even for \unitc.

While projects, modeled through costs and \sprobs, are one way to capture a better decision-making process in conservation planning, other approaches have also been studied.
One direction incorporates ecological dependencies, such as a food webs which represent predator-prey relationships.
A species can then only be preserved if it is either a source of the food web, or can find sufficient prey among other preserved species~\cite{moulton,WeightedFW}. So far, the complexity of the problems has been analyzed only in the simple setting where~$k$ taxa whose survival probability is lifted from 0 to 1 are selected~\cite{moulton,KS24,WeightedFW}.    
Another line of research generalizes phylogenetic trees to networks, allowing \emph{horizontal gene transfer} and \emph{hybridization} in so called \emph{reticulations}.
This concept, however, requires new definitions for phylogenetic diversity on networks.
Over the years, several such definitions have been introduced~\cite{WickeFischer2018,bordewichNetworks,van2025average}.
More recently, these two perspectives---ecological dependencies and phylogenetic networks---have been combined, analyzed, and efficient algorithms have been presented~\cite{MAPPD-Dependencies}.
It seems only natural to study the complexity of \GNAP and its various special cases when they are combined with these richer models of phylogenetic diversity.

\bibliographystyle{elsarticle-num} 
\bibliography{References}

\end{document}

%%% Local Variables:
%%% mode: latex
%%% TeX-master: t
%%% End: